\newtheorem{theorem}{Theorem}
\newtheorem{proposition}{Proposition}
\newtheorem{corollary}{Corollary}
\newtheorem{claim}{Claim}
\theoremstyle{remark}
\newtheorem*{remark}{Remark}
\newcommand{\R}{\mathbb{R}}
\newcommand{\bH}{\mathbf{H}}
\begin{document}
\title{Opportunistic Scheduling in Heterogeneous Networks: Distributed Algorithms and System Capacity}
\author{
Joseph Kampeas, Asaf Cohen and Omer Gurewitz
\\
Department of Communication Systems Engineering
\\
Ben-Gurion University of the Negev
\\
Beer-Sheva, 84105, Israel
\\ {\tt Email: \{kampeas,coasaf,gurewitz\}@bgu.ac.il}}
\date{}
\maketitle
\begin{abstract}
In this work, we design and analyze novel distributed scheduling algorithms for multi-user MIMO systems. In particular, we consider algorithms which do not require sending channel state information to a central processing unit, nor do they require communication between the users themselves, yet, we prove their performance closely approximates that of a centrally-controlled system, which is able to schedule the strongest user in each time-slot.

Our analysis is based on a novel application of the Point-Process approximation. This novel technique allows us to examine non-homogeneous cases, such as non-identically distributed users, or handling various QoS considerations, and give exact expressions for the capacity of the system under these schemes, solving analytically problems  which to date had been open.
Possible application include, but are not limited to, modern 4G networks such as 3GPP LTE, or random access protocols.
\end{abstract}

\newcommand{\commenta}[1]{{\bfseries Asaf says: #1}}
\section{Introduction}
Consider the problem of scheduling users in a multi-user MIMO system. For several decades, at the heart of such systems stood a basic division principle: either through TDMA, FDMA or more complex schemes, users did not use the medium jointly, but rather used some scheduling mechanism to ensure only a single user is active at any given time. Numerous medium access (MAC) schemes at the data link layer also, in a sense, fall under this category. Modern multi-user schemes, such as practical multiple access channel codes or dirty paper coding (DPC) for Gaussian broadcast channels \cite{weingarten2006capacity}, do allow concurrent use of a shared medium, yet, to date, are complex to implement in their full generality. As a result, even modern 4G networks consider scheduling \emph{groups of users}, each of which employing a complex multi-user code \cite{sesia2009lte,IEEE_802.16m}.

Hence, scheduled designs, in which only a single user or a group of users utilize the medium at any given time, are favorable in numerous practical situations. In these cases, the goal is to design an efficient schedule protocol, and compute the resulting system capacity. In this work, we derive the capacity of multi-users MIMO systems under distributed scheduling algorithms, in which each user experiences a different channel distribution, subject to various QoS considerations.
\subsection{Related Work}
Various suggested protocols in the current literature follow the pioneering work of \cite{knopp1995information}. In these systems, at the beginning of a time-slot, a user computes the key parameters relevant for that time-slot. For example, the channel matrix $\bH$ (Figure \ref{MU-MIMO}). It then sends these parameters to a central processing unit, which decides which user to schedule for that time-slot. This enables the central unit to optimize some criterion, e.g., the number of bits transmitted in each slot, by scheduling the user with the best channel matrix. This is the essence of \emph{multi-user diversity}. In \cite{yoo2006optimality}, the authors adopted a zero-forcing beamforming strategy, where users are selected to reduce the mutual interference. The scheme was shown to asymptotically achieve the performance of DPC. An enhanced cooperative scheme, in which base stations optimize their beamforming coordination,  such that the users' transmitting power is subject to SINR  minmax fairness, is given in \cite{zakhour2011minmax}.
Its analysis showed that optimal beamforming strategies have an equivalent convex optimization problem. Yet, its solution requires centralized
CSI knowledge.
In \cite{chen2006enhancing}, the authors devised a  multi-user diversity with interference avoidance by  mitigation approaches, which selects the user with the highest minimal eigenvalue of his Wishart channel matrix $\textbf{H}\textbf{H}^\dagger$.
\cite{jagannathan2007scheduling} proposed a scheduling scheme that transmit only to a small subset of heterogeneous users with favorable channel characteristics. This provided near-optimal performance when the total number of users to choose from was large. Scaling laws for the sum-rate capacity comparing maximal user scheduling, DPC and BF were given in \cite{sharif2007comparison}. Additional surveys can be found in \cite{caire2006mimo,hassibi2007fundamental}. Subsequently, \cite{choi2008capacity} analyzed the scaling laws of maximal \emph{base station} scheduling via Extreme Value Theory (EVT), and showed that by scheduling the station with the strongest channel among $K$ stations (Figure \ref{MU-MIMO_stations}), one can gain a factor of $O(\sqrt{2\log K})$ in the expected capacity compared to random or Round-Robin scheduling.

Extreme value theory and order statistics are indeed the key methods in analyzing the capacity of such scheduled systems. In \cite{wang2007coverage}, the authors suggested a subcarrier assignment algorithm (in OFDM-based systems), and used order statistics to derive an expression for the resulting link outage probability. Order statistics is required, as one wishes to get a handle on the distribution of the \emph{selected} users, rather than the a-priori distribution. In \cite{pun2007opportunistic}, the authors used EVT to derive throughput and scaling laws for scheduling systems using beamforming and various linear combining techniques. \cite{choi2010user} discussed various user selection methods in several MIMO detection schemes. The paper further strengthened the fact that appropriate user selection is essential, and in several cases can even achieve optimality with sub-optimal detectors. Additional user-selection works can be found in \cite{airy2003spatially,swannack2004low,primolevo2005channel,yoo2006finite}.

In \cite{qin2003exploiting,qin2006distributed}, the authors suggested a decentralized MAC protocol for OFDMA channels, where each user estimates his channels gain and compares it to a threshold. The optimal threshold is achieved when only one user exceeds the threshold on average. This distributed scheme achieves $1/e$ of the capacity which could have been achieved by scheduling the strongest user. The loss is due to the channel contention inherent in the ALOHA protocol.
\cite{bai2006opportunistic} extended the distributed threshold scheme for multi-channel setup, where each user competes on $m$ channels. In \cite{qin2008distributed} the authors used a similar approach for power allocation in the multi-channel setup, and suggested an algorithm that asymptotically achieves the optimal water filling solution. To reduce the channel contention, \cite{qin2003exploiting,qin2004opportunistic} introduced a splitting algorithm which resolves collision by allocating several mini-slots devoted to finding the best user. Assuming all users are equipped with a collision detection (CD) mechanism,  the authors also analyzed the suggested protocol for users that are not fully backlogged, where the packets randomly arrive with a total arrival rate $\lambda$ and for channels with memory. \cite{to2010exploiting} used a similar splitting approach to exploit idle channels in a multichannel setup, and showed improvement of $63\%$ compared to the original scheme in \cite{qin2003exploiting}.

\subsection{Main Contribution}
In this work, we suggest a novel technique, based on the Point Process approximation, to analyze the expected capacity of scheduled multi-user MIMO systems. We first briefly show how this approximation allows us to derive recent results described above. However, the strength of this approximation is in facilitating the asymptotic (in the number of users) analysis of the capacity of such systems in \emph{different non-uniform scenarios, where users are either inherently non-uniform or a forced to act this way} due to Quality of Service constrains. We compute the asymptotic capacity for non-uniform users, when users have un-equal shares or when fairness considerations are added. To date, these scenarios did not yield to rigorous analysis.

Furthermore, we suggest a novel distributed algorithm, which achieves a constant factor of the maximal multi-user diversity without centralized processing or communication among the users. Moreover, we offer a collision avoidance enhancement to our algorithm, which asymptotically achieves the maximal multi-user diversity without any collision detection mechanism.

The rest of this paper in organized as follows. In Section \ref{sec. prelim}, we describe the system model and related results. In Section \ref{sec. dist}, we describe the Point of Process technique and briefly show how it is utilized. In Section \ref{sec. hetero} we analyze the non-uniform scenario. In Section \ref{sec. capture} we examine the expected capacity in a non-uniform environment, assuming that the receiver can recover the message from a single collision. In Section \ref{sec. enhance} we describe the distributed algorithm and analyze its performance. Section \ref{sec. conc} concludes the paper.

\begin{figure}[t]
\centering
\subfigure[]{
\includegraphics[scale=0.3]{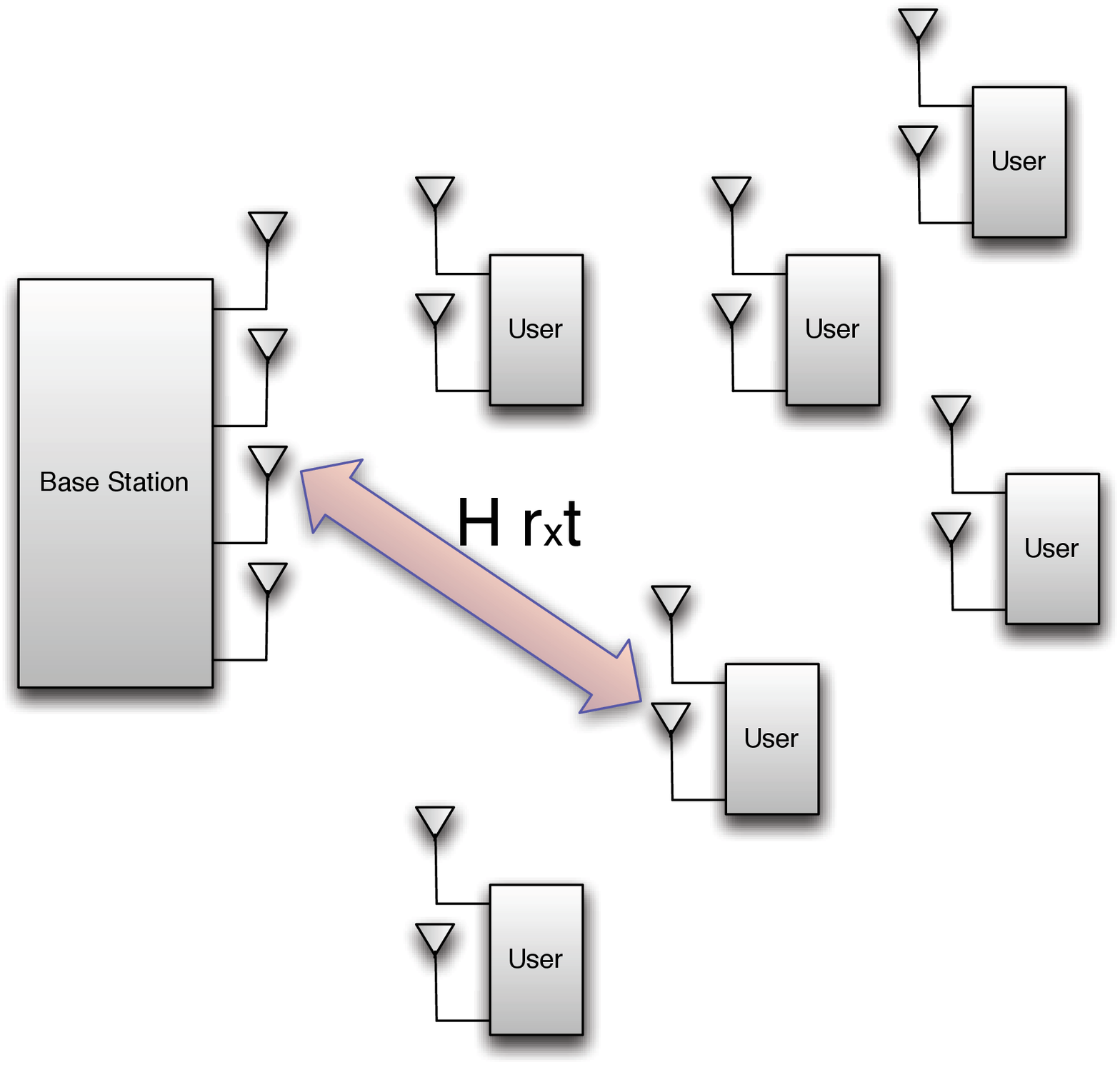}
\label{MU-MIMO}
}
\subfigure[]{
\includegraphics[scale=0.3]{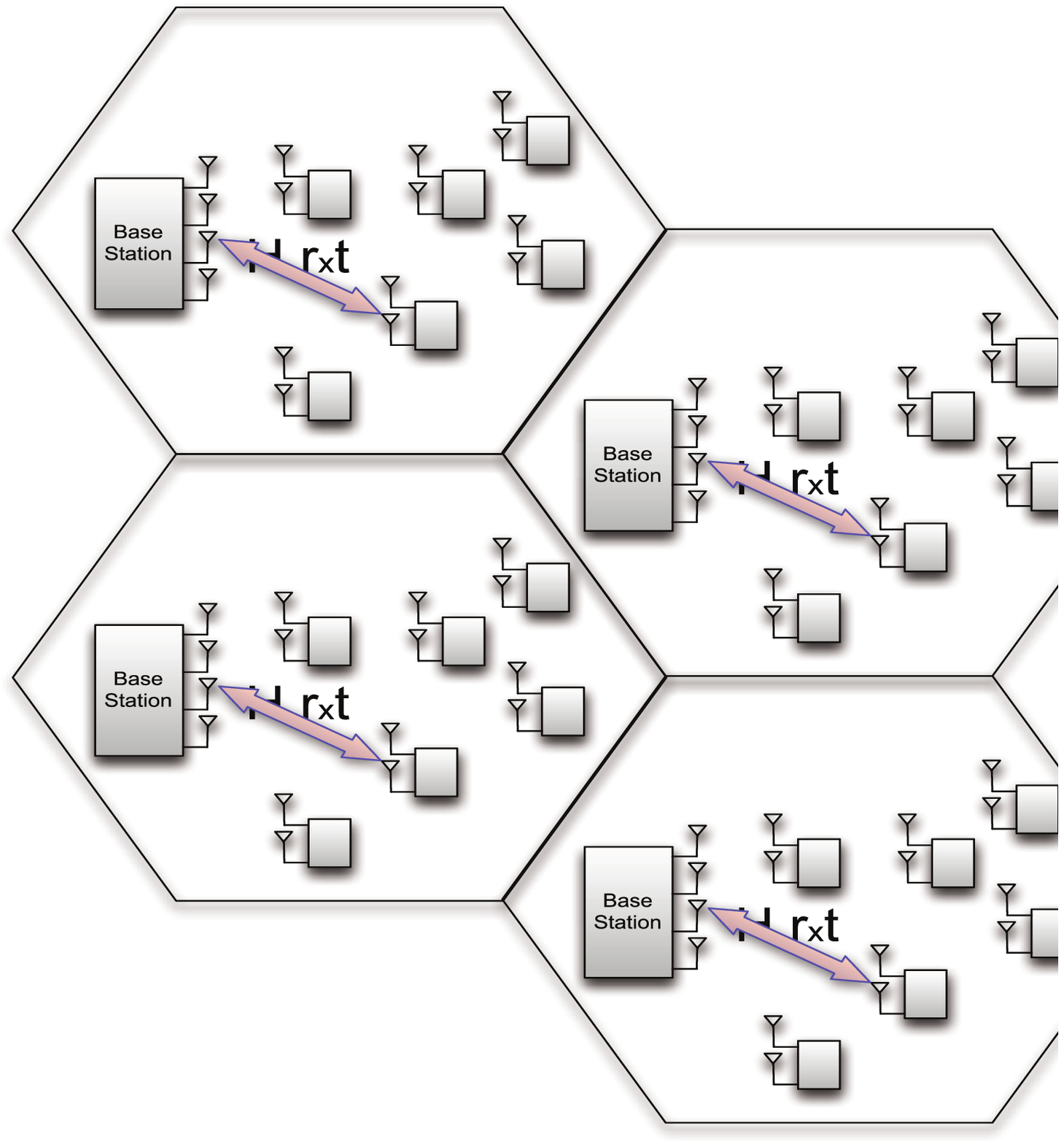}
\label{MU-MIMO_stations}
}
\caption[]{(a) Multi-user MIMO. (b) MU-MIMO stations. }
\end{figure}

\section{Preliminaries}\label{sec. prelim}
We consider a multiple-access model with $K$ users. The channel model is the following: $$\textbf{y} = \textbf{H}x + \textbf{n}$$
where $\textbf{y}\in\mathbb{C}^r$ is  the received vector and  $r$  is the number of receiving antennas.
$x\in\mathbb{C}^t$ is the transmitted vector constrained in its total power to $P$, i.e., $E[x^\dag x]\leq P$, where $t$ is the number of transmitting antennas.
$\textbf{H}\in\mathbb{C}^{r\times t}$ is a complex random Gaussian channel matrix such that all the entries are random $i.i.d.$ complex Gaussian with independent imaginary and real parts, zero mean and variance $1/2$ each.
$\textbf{n}\in \mathbb{C}^r$ is uncorrelated complex Gaussian noise with independent real and imaginary parts, zero mean and variance 1.
In the MIMO uplink model, we assume that the channel $\textbf{H}$ is known at the transmitters. In the centralized scheme,  the transmitters send their channel statistics to the receiver. I.e., the channel output at the receiver consist of the pair $(\textbf{y},\textbf{H})$. Then, the receiver lets to the transmitter with the strongest channel to transmit in the next slot.
In the MIMO downlink model, we assume that the channel $\textbf{H}$ is known at the receivers. In the centralized scheme, the receivers send their channel statistics to the transmitter, so he can choose the receiver that will benefit most from his transmission.
Moreover, we assume that the channel is memoryless, such that for each channel use, an independent realization of $\textbf{H}$ is drawn.
Through this paper, we use bold face notation for random variables.

\subsection{ MIMO Capacity }
\cite{smith2002gaussian, girko1997refinement} and \cite{chiani2003capacity} show that when the  elements of the channel gain matrix, $\textbf{H}$, are i.i.d.\ zero mean with finite moments up to order $4 + \delta$, for some $\delta > 0$ then the distribution of the capacity follows the Gaussian distribution by the CLT, as we can see in Figure~\ref{fig: capacity distribution}, with mean that grows linearly with $min(r,t)$, and variance which is mainly influenced by the power constraint $P$.

With the observation that the channel capacity follows the Gaussian distribution, we would first like to investigate the extreme value distribution that the capacity follows, and thus retrieve the capacity gain when letting a user with the best channel statistics among all other users, utilize a  slot.

\subsection{Extreme Value Analysis for the Maximal Value}
In this sub-section we review the Extreme Value Theorem (EVT), from  \cite{leadbetter1983},\cite{coles2001introduction} and \cite{eastoem453}, that will later be used for asymptotic capacity gain analysis.
\begin{theorem}[\cite{de2006extreme,leadbetter1983,eastoem453}]
\mbox{}
\begin{enumerate}
	\item[(i)]
 Suppose that $\textbf{x}_1,..\textbf{x}_n$ is a sequence of $i.i.d$ random variables with distribution function $F(x)$, and let $$M_n = \max(\textbf{x}_1,...,\textbf{x}_n). $$
If there exist a sequence of normalizing constants $a_n>0$ and $b_n$ such that as $n\rightarrow \infty$,
\begin{equation}\label{eqn: P(Mn) approx}
\Pr(M_n \leq a_nx + b_n)\stackrel{i.d.}{\longrightarrow} G(x)
\end{equation}
for some non-degenerate distribution G, then G is of the generalized extreme value (GEV) distribution type
\begin{equation}\label{eqn: G def}
G(x) = \exp\left\{-(1+\xi x)^{-1/\xi}\right\}
\end{equation}
and we say that $F(x)$ is in the domain of attraction of $G$,
where $\xi$ is the shape parameter, determined by the ancestor distribution $F(x)$ with the following relation.
\item[(ii)]
Let $h$ be the following reciprocal hazard function
\begin{equation}\label{eqn: h(x)}
  h(x)=\frac{1-F(x)}{f(x)} \textmd{ for } x_F \leq x \leq x^F ,
\end{equation}
where $x_F = inf \{x:F(x)>0\}$ and $x^F = sup \{x: F(x)< 1\}$ are the lower and upper endpoints of the ancestor distribution, respectively.
Then the shape parameter $\xi$ is obtained as the following limit,
\begin{equation}\label{eqn: dh dx}
  \frac{d}{dx}h(x)\stackrel{x\rightarrow x^F}{\longrightarrow} \xi .
\end{equation}
\item[(iii)] 
If $\{\textbf{x}_n\}$ is an i.i.d. standard normal sequence of random variables, then the asymptotic distribution of $M_n= \max (\textbf{x}_1,...\textbf{x}_n)$ is a Gumbel distribution. Specifically,
$$\Pr(M_n \leq a_n x + b_n) \longrightarrow e^{-e^{-x}} $$
where
\begin{equation}\label{eqn: a_n normalized}
   a_n = (2\log n)^{-1/2}
\end{equation}
and
\begin{equation}\label{eqn: b_n normalized}
  b_n = (2\log n)^{1/2}-\frac{1}{2}(2\log n)^{-1/2}(\log \log n + \log 4\pi).
\end{equation}
\end{enumerate}
\end{theorem}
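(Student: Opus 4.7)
The plan is to treat the three parts in order of increasing specificity, using the upper tail of $F$ as the common thread.

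For part (i), I would use the classical max-stability argument. Since $\Pr(M_{nk}\le a_{nk}x+b_{nk})=F^{nk}(a_{nk}x+b_{nk})\to G(x)$ while also $\Pr(M_{nk}\le a_n y+b_n)=(F^n(a_n y+b_n))^k\to G(y)^k$, matching the two sequences via the convergence-to-types lemma forces $G$ to satisfy the stability identity $G^k(\alpha_k x+\beta_k)=G(x)$ for every integer $k\ge 1$ and some $\alpha_k>0,\beta_k$. Taking logarithms and setting $\psi(x)=-\log G(x)$ turns this into the functional equation $\psi(\alpha_k x+\beta_k)=\psi(x)/k$. Standard analysis of this equation (splitting on whether $\alpha_k=1$, $\beta_k=0$, or neither) yields exactly the three classical Fréchet, Weibull, Gumbel types, which are unified by the GEV expression in (\ref{eqn: G def}) with $\xi>0$, $\xi<0$, and $\xi=0$ (interpreted as the limit $\xi\to 0$) respectively.

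For part (ii), the strategy is to use the von Mises sufficient condition. Writing $\bar F=1-F$, one shows that if $(d/dx)h(x)\to\xi$ as $x\to x^F$, then for any fixed $t$ in the appropriate range, $\bar F(x+t\,h(x))/\bar F(x)\to(1+\xi t)^{-1/\xi}$ as $x\to x^F$. This is essentially an application of the identity $\log \bar F(x)=-\int^{x}\!(1/h(u))\,du+\text{const}$ combined with a Taylor expansion of $1/h$ about $x$ that exploits $h'(u)\to\xi$. Choosing $b_n$ with $n\bar F(b_n)\to 1$ and $a_n=h(b_n)$ then gives $n\bar F(a_n x+b_n)\to(1+\xi x)^{-1/\xi}$, and since $\Pr(M_n\le a_n x+b_n)=(1-\bar F(a_n x+b_n))^n\sim\exp(-n\bar F(a_n x+b_n))$, part (i) follows with the stated $\xi$.

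For part (iii), I would specialize to $F=\Phi$, the standard normal CDF. Mills' ratio gives $\bar\Phi(x)\sim \phi(x)/x$ as $x\to\infty$, so the reciprocal hazard satisfies $h(x)=\bar\Phi(x)/\phi(x)\sim 1/x$ and therefore $h'(x)\to 0$, placing $\Phi$ in the Gumbel ($\xi=0$) domain of attraction. To identify the normalizing constants, I would choose $b_n$ by the defining condition $n\bar\Phi(b_n)=1$, i.e.\ $n\phi(b_n)/b_n\approx 1$, which after taking logarithms becomes $b_n^2/2+\log b_n+\tfrac12\log(2\pi)=\log n$. Solving this asymptotically by bootstrapping (first approximation $b_n\approx\sqrt{2\log n}$, then substituting back into the $\log b_n$ term) produces exactly (\ref{eqn: b_n normalized}). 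Setting $a_n=h(b_n)\sim 1/b_n\sim(2\log n)^{-1/2}$ recovers (\ref{eqn: a_n normalized}), and applying part (ii) yields the Gumbel limit $\exp(-e^{-x})$.

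The main obstacle is in part (ii): the argument that $h'\to\xi$ implies the full uniform tail equivalence $\bar F(x+t\,h(x))/\bar F(x)\to(1+\xi t)^{-1/\xi}$ requires a careful handling of the integral representation of $\log\bar F$ near the upper endpoint, because the asymptotic behavior of $h$ alone does not immediately control $\bar F$; one must justify the interchange of limits and the dominated convergence step uniformly in $t$. Once that uniform tail estimate is in hand, parts (i) and (iii) follow from essentially algebraic manipulations.
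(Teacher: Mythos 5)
Your proposal is correct as a sketch, but it proves considerably more than the paper does, and it is worth separating the two. The paper treats parts (i) and (ii) purely as citations to \cite{de2006extreme,leadbetter1983,eastoem453}; the only part it actually proves is (iii), in Appendix~\ref{sec: appendix A}. For that part your argument is essentially the paper's: both start from Mills' ratio $1-\Phi(x)\sim\phi(x)/x$ (relation (\ref{eqn: Phi relation of tail})), both use the reciprocal-hazard derivative of part (ii) to conclude $\xi=0$ and hence a Gumbel limit, and both extract the constants by solving $n\left(1-\Phi(u)\right)=e^{-x}$ asymptotically, taking logarithms and bootstrapping (first $u\approx\sqrt{2\log n}$, then refining through the $\log u$ term) to land on (\ref{eqn: a_n normalized}) and (\ref{eqn: b_n normalized}). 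The only cosmetic difference is that the paper inverts the tail equation as an affine function of $x$, reading off $a_n$ and $b_n$ simultaneously from $u=a_nx+b_n+o(a_n)$, whereas you pin $b_n$ by $n\left(1-\Phi(b_n)\right)=1$, set $a_n=h(b_n)$, and then invoke (ii); the two computations coincide term by term. Where you genuinely depart from the paper is in parts (i) and (ii): your max-stability and convergence-to-types argument for the GEV trichotomy, and your von Mises argument via the integral representation of $\log\left(1-F\right)$, are the standard textbook proofs that the paper deliberately omits. That buys self-containedness at the cost of length, and your own caveat identifies the right pressure point: the passage from $h'(x)\to\xi$ to the tail-ratio limit $\left(1-F(x+t\,h(x))\right)/\left(1-F(x)\right)\to(1+\xi t)^{-1/\xi}$, uniformly enough in $t$ to conclude convergence of $F^n(a_nx+b_n)$, is the real technical content of (ii), and it (together with the implicit smoothness assumptions on $F$ needed for $h'$ to exist) is exactly what the cited references carry for the paper.
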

In Figure~\ref{fig: max capacity distribution} we see the max value distribution for $500$ observations which following the Gaussian distributed simulated in Figure~\ref{fig: capacity distribution} 
For completeness, a sketch of the proof is given in Appendix~\ref{sec: appendix A}.
Similarly, if $\{\textbf{x}_n\}$ follows the Gaussian distribution with mean $\mu$ and variance $\sigma^2$,
then the above theorem normalizing constants results in
\begin{equation}\label{eqn: a_n with mu and sigma}
   a_n = \sigma (2\log n)^{-\frac{1}{2}}
\end{equation}
and
\begin{equation}\label{eqn: b_n with mu and sigma}
   b_n = \sigma \left[(2\log n )^{\frac{1}{2}}- \frac{1}{2}(2\log n)^{-\frac{1}{2}}[\log\log n + \log(4\pi)]\right] + \mu.
\end{equation}
It follows that  for a Gaussian distribution,
$$a_n = \sigma(2\log n)^{-\frac{1}{2}}\rightarrow0,$$
which implies that
\begin{equation}\label{eqn: Mn limit}
   M_n\sim b_n \sim \sigma (2\log n )^{\frac{1}{2}} + \mu.
\end{equation}
\begin{figure}
\centering
  \includegraphics[scale=0.75]{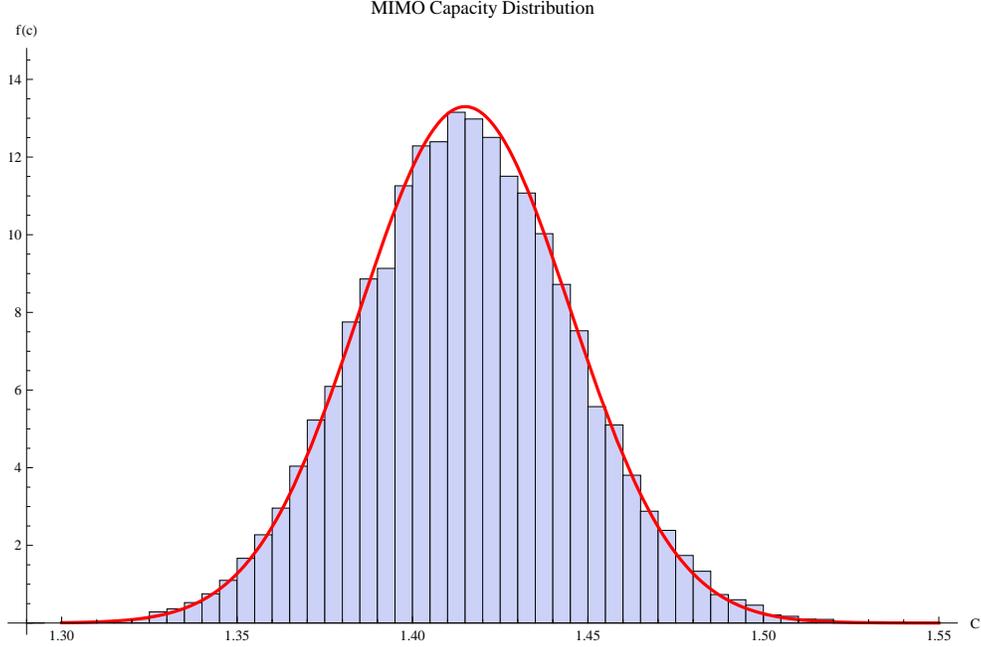}
  \caption{MIMO capacity distribution for  m=32 transmitting antennas and n=128 receiving antennas vs. Gaussian Distribution with $\mu = \sqrt{2}$ and $\sigma = 0.03$ (red line). }\label{fig: capacity distribution}
\end{figure}
\begin{figure}
\centering
  \includegraphics[scale=0.70]{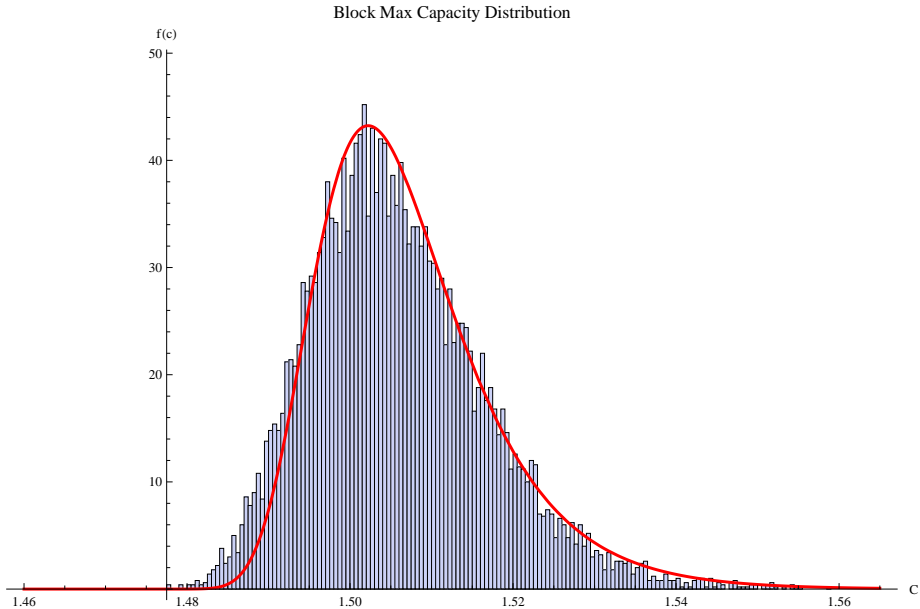}
  \caption{Maximal capacity distribution, when choosing the maximal capacity among 500 capacities that following the Gaussian distribution simulated  in Figure \ref{fig: capacity distribution}, with $\mu=\sqrt{2}$ and $\sigma = 0.03$. The red line is the corresponding Gumbel density plotted in range $[\mu + 2\sigma, \mu + 5\sigma].$ }\label{fig: max capacity distribution}
\end{figure}
\subsection{Multi-User Diversity}
   Assuming MIMO uplink model, i.e., perfect CSI of $K$ users at the receiver, then the expected capacity that we achieve by choosing the maximal user in each time slot will follow the expected value of Gumbel distribution with parameters $a_K,b_K$ \cite{choi2008capacity}, i.e.,
\begin{eqnarray}\label{eqn: Gumbel Expexted Value}
    E[M_K] &\stackrel{(a)}{=}& \sigma \left( b_K + a_K\gamma \right) + \mu  \\
    &\stackrel{(b)}{=}& \sigma \left[(2\log K )^{\frac{1}{2}}- \frac{1}{2}(2\log K)^{-\frac{1}{2}}[\log\log K + \log(4\pi)]  + \gamma (2\log K)^{-\frac{1}{2}}  \right] + \mu \nonumber
\end{eqnarray}
where $\gamma\approx 0.57721$ is Euler-Mascheroni constant, $(a)$ follows from the expectation of the Gumbel distribution and $(b)$ follows from (\ref{eqn: a_n normalized}) and (\ref{eqn: b_n normalized}). 
Hence, for large enough $K$,
$$E[M_K] = \sigma (2\log K )^{\frac{1}{2}} + \mu + o\left(\frac{1}{\sqrt{\log K}}\right)$$
That is, for large number of users, the expectation capacity grows like $\sqrt{2\log K}$. 

\section{ Distributed Algorithm }\label{sec. dist}
A major drawback of the previous method is that a base station must receive a perfect CSI from all users in order to decide which user is adequate to utilize the next time slot, which may not be feasible for a large number of users. Moreover, the delay caused by transmitting CSI to the base station would limit the performance.

In this section, we begin our discussion from a distributed algorithm, shown in \cite{bai2006opportunistic}, in which stations do not send their channel statistics to the base station, yet, with some subtle enhancements, the performance is asymptotically equal to that in (\ref{eqn: Gumbel Expexted Value}).
We provide an alternative analysis to this algorithm, that will serve us later in this paper.

The algorithm is as follows. Given the number of users, we set a high capacity threshold such that only a small fraction of the users will exceed it. In each slot, the users estimate their own capacity. If the capacity seen by a user is greater than the capacity threshold, he transmits in that slot. Otherwise, the user keeps silent in that slot. The base station can successfully receive the
transmission if no collision occurs.

Let $C_{av}\left(u_{k}\right)$ denote the expected capacity, given a threshold $u_{k}$ such that $k\ll K$ i.i.d.\ users exceed it on average. For sufficiently large number of users, $K$, we obtain the following.
\begin{proposition}\label{prop: C av uniform users}
The expected capacity when working with a single user in each slot is
\begin{eqnarray}\label{eqn: threshold capacity throuput}
  C_{av}\left(u_{k}\right)  &=& k e^{-k}\left( u_{k} + \sigma a_K  \right) + o(a_{K})
\end{eqnarray}
where $a_K$ is the normalizing constant given in (\ref{eqn: a_n normalized}).
\end{proposition}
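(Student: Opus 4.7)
The plan is to decompose $C_{av}(u_k)$ as the product of two factors: the probability that exactly one user exceeds the threshold $u_k$ (so that a successful, collision-free transmission takes place), and the conditional expected capacity of that single winning user. Slots with zero transmitters deliver no capacity, and slots with two or more transmitters collide and also deliver none, so $C_{av}(u_k) = \Pr(\text{exactly one user exceeds }u_k)\cdot E[X \mid X > u_k]$, where $X$ is the (Gaussian) capacity of a generic user.

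First I would pin down the threshold. By definition, the per-user probability of exceeding $u_k$ is $p = k/K$. Part (iii) of the EVT theorem, together with $-\log F \sim 1-F$ near $1$, gives $K(1-F(\sigma(a_K x + b_K)+\mu)) \to e^{-x}$, which suggests the choice $u_k = \sigma(b_K - a_K \log k) + \mu$; in particular $(u_k-\mu)/\sigma = 1/a_K + o(1/a_K)$ and $u_k = O(1/a_K)$. The number of users above $u_k$ is $\mathrm{Binomial}(K, k/K)$, so
\begin{equation*}
\Pr(\text{exactly one exceeds } u_k) = K\cdot\frac{k}{K}\cdot\Bigl(1-\frac{k}{K}\Bigr)^{K-1} = k e^{-k} + O(1/K).
\end{equation*}
For the conditional mean I set $t = (u_k-\mu)/\sigma$ and invoke the standard Mills' ratio expansion, $\phi(t)/(1-\Phi(t)) = t + 1/t + O(t^{-3})$, which yields
\begin{equation*}
E[X\mid X>u_k] = \mu + \sigma\Bigl(t + \frac{1}{t}\Bigr) + O(\sigma t^{-3}) = u_k + \frac{\sigma^2}{u_k - \mu} + O(a_K^3) = u_k + \sigma a_K + o(a_K).
\end{equation*}

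Multiplying the two factors gives the stated expression. The step I expect to require the most care is bookkeeping the error terms, precisely because the dominant term $u_k$ inside the conditional mean grows like $1/a_K$, which is much larger than the asserted error $o(a_K)$. Concretely, the $O(1/K)$ error in the single-winner probability, when multiplied by $u_k$, produces an additive error of order $1/(Ka_K)$, so I must verify that $1/(Ka_K) = o(a_K)$, i.e.\ that $\log K / K \to 0$; this holds, but it is the nontrivial bookkeeping that makes the $o(a_K)$ remainder in the proposition actually hold, and it is the only place where the tightness of the estimate is genuinely at stake.
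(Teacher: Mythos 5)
Your proof is correct, and it shares the paper's overall skeleton --- the decomposition $C_{av}(u_k)=\Pr(\text{exactly one exceedance})\cdot E[X\mid X>u_k]$ and the Binomial$(K,k/K)\to$ Poisson$(k)$ limit giving $ke^{-k}$ are exactly the paper's Claim~\ref{claim: utilized slot probability} --- but you compute the conditional tail expectation by a genuinely different route. The paper derives $E[C\mid C>u_k]=u_k+\sigma a_K+o(a_K)$ (its Corollary~\ref{coro: expected capacity above est by GEV thr}) from the point-process approximation: exceedances of a high threshold converge to a Poisson process, the excess distribution converges to a generalized Pareto law, which in the Gaussian case ($\xi\to 0$) degenerates to an exponential with mean $\sigma a_K$, and the expectation follows. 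You instead use the classical truncated-normal identity $E[X\mid X>u_k]=\mu+\sigma\,\phi(t)/(1-\Phi(t))$ with the Mills' ratio expansion $\phi(t)/(1-\Phi(t))=t+1/t+O(t^{-3})$, plus the estimate $t=(u_k-\mu)/\sigma\sim 1/a_K$. Your route is more elementary and self-contained, with explicit and easily checked error control (your closing remark that the cross-term $O(1/K)\cdot u_k=O(\sqrt{\log K}/K)=o(a_K)$ is exactly the bookkeeping the paper glosses over); its drawback is that it leans entirely on Gaussianity of the per-user capacity. The paper's heavier point-process machinery is chosen deliberately because it is the tool that later handles the heterogeneous-user, QoS, and capture-effect settings (Theorem~\ref{theorem: expected capacity non-uniform users} and Proposition~\ref{prop: capture effect}), where per-user tails differ and a Mills'-ratio calculation for a single common threshold no longer composes as cleanly; within the i.i.d.\ Gaussian scope of Proposition~\ref{prop: C av uniform users}, however, your argument is a complete and arguably tighter proof.
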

Due to the distributed nature of the algorithm, some slots will be \emph{idle} if no user exceeds the threshold, that is, no user transmits in that slot. Or, \emph{collisions} may occur if more than one user exceed the threshold, that is, more than one user is trying to transmit in a slot. Thus, we say that a slot is \emph{utilized} if exactly one user exceeds the threshold, namely, exactly one user transmits in a slot. Indeed, the expected capacity $C_{av}\left(u_{k}\right)$ has the form
$$C_{av}\left(u_{k}\right)= \Pr\left(\textmd{utilized slot}\right)E[C|C>u_{k}]$$
where
\begin{equation}\label{eqn: probability unutilized slot uniform users}
   \Pr\left(\textmd{utilized slot}\right) = k e^{-k}
\end{equation}
and
\begin{equation}\label{eqn: threshold expected capacity}
    E[C|C>u_{k}] =  u_{k} + a_{K} + o(a_{K}).
\end{equation}
That is, to compute $C_{av}\left(u_{k}\right)$ we analyze the expected capacity when letting a user with above-threshold-capacity utilize a slot, and the probability
that only a single user utilizes the slot.
We choose to prove the above through the point process method \cite{eastoem453,smith1989extreme}. With the point of process, we can model and analyze the occurrence of large capacities, which  can be represented as a  point process, when considering the users index along with the capacity value. Later, in the main contribution of  this paper, this method will allow us to analyze the non-uniform case as well.

The following two subsections sketch the key steps to
prove Proposition~\ref{prop: C av uniform users}. The first discusses the estimation of the threshold, given the fraction of users which are required to pass it on average.
The second computes the distribution of the capacity, given that the threshold was passed. The third subsection discusses the rate at which users pass the threshold.

\subsection{Threshold Estimation}
Let $u_k$ be a threshold such that only $k$ strongest users will exceed that threshold.
Assuming that the capacity follows a Gaussian distribution $\Phi(x)$, with mean $\mu$ and variance $\sigma^2$, $u_k$ can be easily estimated using the inverse error function.
\begin{claim}\label{claim: expected capacity above est by ierf thr}
  The threshold $u_{k}$, such that $k$ users out of total $K$ users will exceed it on average is
\begin{equation}\label{eqn: p threshold expected capacity ierf}
      u_{k}  = \mu + \sigma \sqrt{2 \log \left(\frac{K}{k}\right) -\log\left[-2 \pi  \left(2 \log\left(\frac{k}{K}\right)+\log[2 \pi ]\right)\right]} + o\left(K^{-2}\right)
   \end{equation}
\end{claim}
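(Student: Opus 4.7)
The natural starting point is to translate the claim's defining condition into an equation. If $K$ i.i.d.\ users each have Gaussian capacities $C_i \sim N(\mu, \sigma^2)$, then the expected number exceeding $u_k$ is $K\cdot\Pr(C > u_k)$, so the threshold condition is simply
\[
  1-\Phi\!\left(\frac{u_k - \mu}{\sigma}\right) = \frac{k}{K},
\]
where $\Phi$ is the standard normal CDF. Writing $z := (u_k-\mu)/\sigma$, my task reduces to asymptotically inverting $1-\Phi(z) = k/K$ for large $K/k$ (which corresponds to large $z$). So the plan is essentially: plug in a sharp Mills-ratio estimate for the Gaussian tail, take logarithms, and solve the resulting transcendental equation by bootstrap iteration.

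Concretely, I would first invoke the standard tail asymptotic $1-\Phi(z) = \phi(z)/z\cdot(1 + O(z^{-2}))$ with $\phi(z) = (2\pi)^{-1/2}\exp(-z^2/2)$. Taking logarithms of $\phi(z)/z = k/K$ and rearranging yields the implicit equation
\[
  z^2 \;=\; 2\log(K/k) \;-\; \log\bigl(2\pi z^2\bigr) \;+\; O(z^{-2}).
\]
I would solve this by a two-step iteration. In the zeroth pass I drop the $\log z$ term entirely (equivalent to solving $\phi(z_0) = k/K$), giving
\[
  z_0^2 \;=\; 2\log(K/k) \;-\; \log(2\pi).
\]
Substituting $z_0^2$ into the surviving $\log(2\pi z^2)$ term produces the first refinement
\[
  z_1^2 \;=\; 2\log(K/k) \;-\; \log\bigl[2\pi\bigl(2\log(K/k) - \log(2\pi)\bigr)\bigr],
\]
and the identity $2\log(K/k) - \log(2\pi) = -\bigl(2\log(k/K) + \log(2\pi)\bigr)$ rewrites the bracket as $-2\pi\bigl(2\log(k/K) + \log(2\pi)\bigr)$, exactly matching the form in the claim. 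Multiplying by $\sigma$ and adding $\mu$ recovers the stated expression for $u_k$.

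The main obstacle, as is typical for Mills-ratio inversions, is propagating the error carefully enough to reach the advertised $o(K^{-2})$ remainder. I would handle this by keeping explicit track of two error sources: (i) the higher-order terms $-z^{-2} + 3z^{-4} - \cdots$ in the Mills-ratio expansion, which contribute $O(1/\log K)$ inside the log and therefore only $O(1/(\log K)^2)$ after one more iteration; and (ii) the difference between substituting $z_0^2$ versus the true $z^2$ into the log, which is again a $\log$-of-$\log$ correction. Since $1/K^2$ decays faster than any polynomial in $1/\log K$, the $o(K^{-2})$ bound is actually a deliberately loose statement once we push the iteration one step further than written, and the careful bookkeeping just confirms that the refinement lies comfortably within this slack. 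The rest is elementary algebra to express everything in the form $\sqrt{\,\cdot\,}$ and absorb the additive corrections into the $o(K^{-2})$ remainder.
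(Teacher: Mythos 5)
Your core computation is correct, and it is essentially the paper's proof made explicit: the paper simply writes $u_k = \mu + \sqrt{2}\sigma\operatorname{erfc}^{-1}\left(\frac{2k}{K}\right)$ from the condition $1-\Phi(u_k)=\frac{k}{K}$ and asserts that the displayed expansion ``follows from a Taylor series expansion.'' Your Mills-ratio inversion with one bootstrap step is exactly the missing content of that one-line assertion, and your algebra --- including the identity that rewrites $2\log(K/k)-\log(2\pi)$ as $-\bigl(2\log(k/K)+\log(2\pi)\bigr)$ so as to match the form in the claim --- is right.

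The genuine gap is in your error bookkeeping, and it is a direction-of-comparison error. You correctly compute that the residuals are polynomial in $1/\log(K/k)$: the neglected Mills-ratio terms and the substitution of $z_0^2$ for $z^2$ inside the logarithm each leave a correction of order $\log\log K/\log K$ in $z^2$, hence of order $\log\log K/(\log K)^{3/2}$ in $z$ itself. You then conclude that these residuals ``lie comfortably within'' the $o(K^{-2})$ slack because $1/K^2$ decays faster than any polynomial in $1/\log K$. That fact implies precisely the opposite conclusion: since $K^{-2}=o\bigl((\log K)^{-m}\bigr)$ for every $m$, an error of polynomial size in $1/\log K$ is astronomically \emph{larger} than $K^{-2}$, and therefore is not $o(K^{-2})$. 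Pushing the iteration one step further does not rescue this, since each additional bootstrap step gains only one more factor of $1/\log K$; no finite number of iterations of a truncated expansion reaches a remainder that is $o(K^{-2})$. What your argument actually establishes is the stated formula with remainder $O\bigl(\log\log(K/k)/(\log(K/k))^{3/2}\bigr)$, not $o(K^{-2})$. (In fairness, the paper's own $o(K^{-2})$ is asserted without justification and is questionable for the same reason; but a proof of the claim as written must either establish that bound or flag it as unattainable, whereas your write-up argues --- by an inverted comparison --- that it holds.)
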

\begin{proof}
 Let $\operatorname{erfc}^{-1}(\cdot)$ denote the complementary inverse error function. The threshold $u_k$ such that $1- \Phi(u_k) = \frac{k}{K}$ is given by
\begin{eqnarray}\label{eqn: estimated u_p Gaussian}
  u_k &=& \mu + \sqrt{2}\sigma \operatorname{erfc}^{-1}\left(\frac{2k}{K}\right)\nonumber\\
   &=& \mu + \sigma  \sqrt{2 \log \left(\frac{K}{k}\right) -\log\left[-2 \pi  \left(2 \log\left(\frac{k}{K}\right)+\log[2 \pi ]\right)\right]}+o\left(K^{-2}\right)\nonumber.
\end{eqnarray}
where the last equality follows from a Taylor series expansion.
\end{proof}


Nevertheless, using the stability law of extreme values \cite{coles2001introduction}, the threshold can also be approximated for a
large number of users directly. Indeed, using EVT, the threshold can be computed without evaluation of the inverse   $\operatorname{erfc}(\cdot)$, which cannot be evaluated in closed form. On the other hand,  the EVT relies itself on approximation. To gain sufficient amount of statistics, we logically divide the $K$  users to $\sqrt{K}$ blocks such that in each block there are $\sqrt{K}$ users, as we see in Figure~\ref{fig: GaussianExceedanceVsBlockMaxima2}. From the stability law of extreme values, the maximum in each block is still well approximated by GEV distributions. Thus, a threshold $u_{p}$, such that only a fraction  $p = \frac{k}{\sqrt{K}}$ among $\sqrt{K}$ maximal users will exceed the threshold on average, attained as follows.
\begin{claim}\label{prop: expected capacity above est by GEV thr}
The threshold $u_{p}$, such that $k$ strongest users out of total $\sqrt{K}$ strongest users will exceed it on average follows
\begin{eqnarray}\label{eqn: p threshold}
    u_{p} &=& \mu + \sigma\left(2\log \frac{\sqrt{K}}{k}\right)^{\frac{1}{2}} - \sigma\left(2\log \frac{\sqrt{K}}{k}\right)^{-\frac{1}{2}} \log \left\{ -\log \left(1 - \frac{k}{\sqrt{K}}\right)\right\} \\
    &&  + o\left(\frac{1}{\sqrt{( \log \frac{\sqrt{K}}{k})}}\right).     \nonumber
\end{eqnarray}
\end{claim}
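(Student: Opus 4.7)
The plan is to exploit the stability law of extreme values, which asserts that the maxima over sub-blocks of Gaussian variables are themselves asymptotically GEV-distributed with known normalizing constants, and then invert the resulting Gumbel CDF at the level $p=k/\sqrt{K}$.

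First, I would set up the EVT approximation. After partitioning the $K$ users into $\sqrt{K}$ blocks of size $\sqrt{K}$, part (iii) of the Extreme Value Theorem stated earlier gives that each block maximum $M$ satisfies $\Pr(M \le a_n x + b_n) \to \exp(-e^{-x})$, where $a_n$ and $b_n$ are read from (\ref{eqn: a_n with mu and sigma}) and (\ref{eqn: b_n with mu and sigma}) with $n$ chosen to reflect the scale at which only a fraction $p = k/\sqrt{K}$ of the $\sqrt{K}$ block maxima is to exceed the threshold. I would take $n = \sqrt{K}/k$ so that the leading normalizing constants become $b_n - \mu = \sigma (2\log(\sqrt{K}/k))^{1/2} + O(\log\log/(\log)^{1/2})$ and $a_n = \sigma(2\log(\sqrt{K}/k))^{-1/2}$, matching the form in the claim.

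Second, I would impose the defining condition. Requiring that on average $k$ out of $\sqrt{K}$ block maxima exceed $u_p$ means $\Pr(M > u_p) = k/\sqrt{K}$. Substituting the Gumbel approximation gives
\begin{equation*}
1 - \exp\!\Bigl(-\exp\!\Bigl(-\tfrac{u_p - b_n}{a_n}\Bigr)\Bigr) \;=\; \frac{k}{\sqrt{K}} .
\end{equation*}
Solving this for $u_p$ is straightforward: apply $\log$ twice to isolate $(u_p-b_n)/a_n$, yielding the closed form
\begin{equation*}
u_p \;=\; b_n \;-\; a_n \,\log\!\Bigl\{-\log\!\bigl(1 - \tfrac{k}{\sqrt{K}}\bigr)\Bigr\}.
\end{equation*}

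Third, I would substitute the explicit normalizing constants and collect terms. The $(2\log(\sqrt{K}/k))^{1/2}$ piece of $b_n$ gives the first summand in the claim, and the $a_n$ prefactor times the double-log gives the second. What remains are the sub-leading contributions from $b_n$ — in particular, the term proportional to $(2\log(\sqrt{K}/k))^{-1/2}(\log\log(\sqrt{K}/k) + \log 4\pi)$ — together with the convergence error from the EVT approximation itself. I would argue these are all dominated by $1/\sqrt{\log(\sqrt{K}/k)}$ as $K\to\infty$ for any fixed $k$, and can therefore be absorbed into the stated $o(\cdot)$ term.

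The main obstacle is not the CDF inversion (which is a one-line calculation) but rather the bookkeeping in the last step: justifying which correction terms are negligible at the stated order, and identifying the correct effective $n$ in the normalizing constants so that the leading square-root matches $(2\log(\sqrt{K}/k))^{1/2}$ rather than $(2\log\sqrt{K})^{1/2}$. This is essentially an exercise in asymptotic expansion, relying on $-\log(1-k/\sqrt{K}) \sim k/\sqrt{K}$ for the low-$p$ regime and on the stability-law freedom to rescale the reference block size without changing the asymptotic Gumbel limit.
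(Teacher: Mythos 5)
Your proposal follows essentially the same route as the paper's proof: invoke the Gumbel limit for the $\sqrt{K}$ block maxima, write the defining condition as the quantile equation $G(u_p)=1-p$ with $p=k/\sqrt{K}$ and normalizing constants indexed by the return period $p^{-1}=\sqrt{K}/k$, and invert to obtain $u_p = b_{p^{-1}} - a_{p^{-1}}\log\left\{-\log(1-p)\right\} + o(a_{p^{-1}})$, which is precisely the paper's (\ref{eqn: estimated u_p Gumbel}).

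The one place you go beyond the paper's written proof is the final bookkeeping step, and there your explicit assertion is wrong: the correction inherited from $b_{p^{-1}}$, namely $\frac{\sigma}{2}\left(2\log\frac{\sqrt{K}}{k}\right)^{-1/2}\left(\log\log\frac{\sqrt{K}}{k}+\log 4\pi\right)$, is \emph{not} $o\left(1/\sqrt{\log\frac{\sqrt{K}}{k}}\right)$; its ratio to $\left(\log\frac{\sqrt{K}}{k}\right)^{-1/2}$ grows like $\log\log\frac{\sqrt{K}}{k}$, which diverges as $K\rightarrow\infty$. So either the $\log\log$ term must be retained in the displayed expansion, or the error term must be weakened to $O\left(\log\log\frac{\sqrt{K}}{k}\,/\sqrt{\log\frac{\sqrt{K}}{k}}\right)$. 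In fairness, the paper's own claim commits the same imprecision when it silently expands $b_{p^{-1}}$ from (\ref{eqn: estimated u_p Gumbel}) into (\ref{eqn: p threshold}) and drops the $\log\log$ correction; your write-up simply makes explicit a step the paper leaves unstated, and in doing so exposes that the stated error order is too strong, rather than introducing a new gap of your own.
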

\begin{proof}
An estimated threshold can be obtained by using EVT.
 A user estimates a threshold $u_{p}$ that is near $x^F$ such that only  a fraction $p$ of the \emph{largest maximal capacities, among all maximal capacities}, will exceed.
   For all  $x$ that satisfies $a_{{p}^{-1}} x + b_{{p}^{-1}} > u_{p}$, i.e.\ are in the tail corresponding to the upper tail of Gumbel distribution, the return level $u_{p}$ is the $1 - {p}$ quantile of the Gumbel distribution for all
$0 < p < 1$, and has return period of $n = {p}^{-1}$ observations.
Thus, a user estimates the threshold by a simple quantile function,
$$1 - G_0(u_{p}) = {p}.$$
For such $u_{p}$ we have
\begin{equation}\label{eqn: last quantile threshold}
  G(u_{p}) =  \exp \{-e^{-(u_{p} - b_{{p}^{-1}})/a_{{p}^{-1}}}\} = 1 - {p}
\end{equation}
and we obtain that
\begin{equation}\label{eqn: estimated u_p Gumbel}
  u_{p} = b_{{p}^{-1}} - a_{{p}^{-1}} \log \left\{ -\log (1 - {p})\right\} + o(a_{{p}^{-1}}).
\end{equation}
The $o(a_{{p}^{-1}})$ error is derived from the Gumbel approximation error, as shown in Appendix~\ref{sec: appendix A} .
\end{proof}

Note that the limit between $u_{p}$ given in (\ref{eqn: estimated u_p Gumbel}) and $u_k$ given in (\ref{eqn: estimated u_p Gaussian}) is $\frac{u_{p}}{u_k} \rightarrow \frac{3}{2\sqrt{2}} \approx 1.06$.

\begin{figure}
\centering
\subfigure[]{
\includegraphics[scale=0.28]{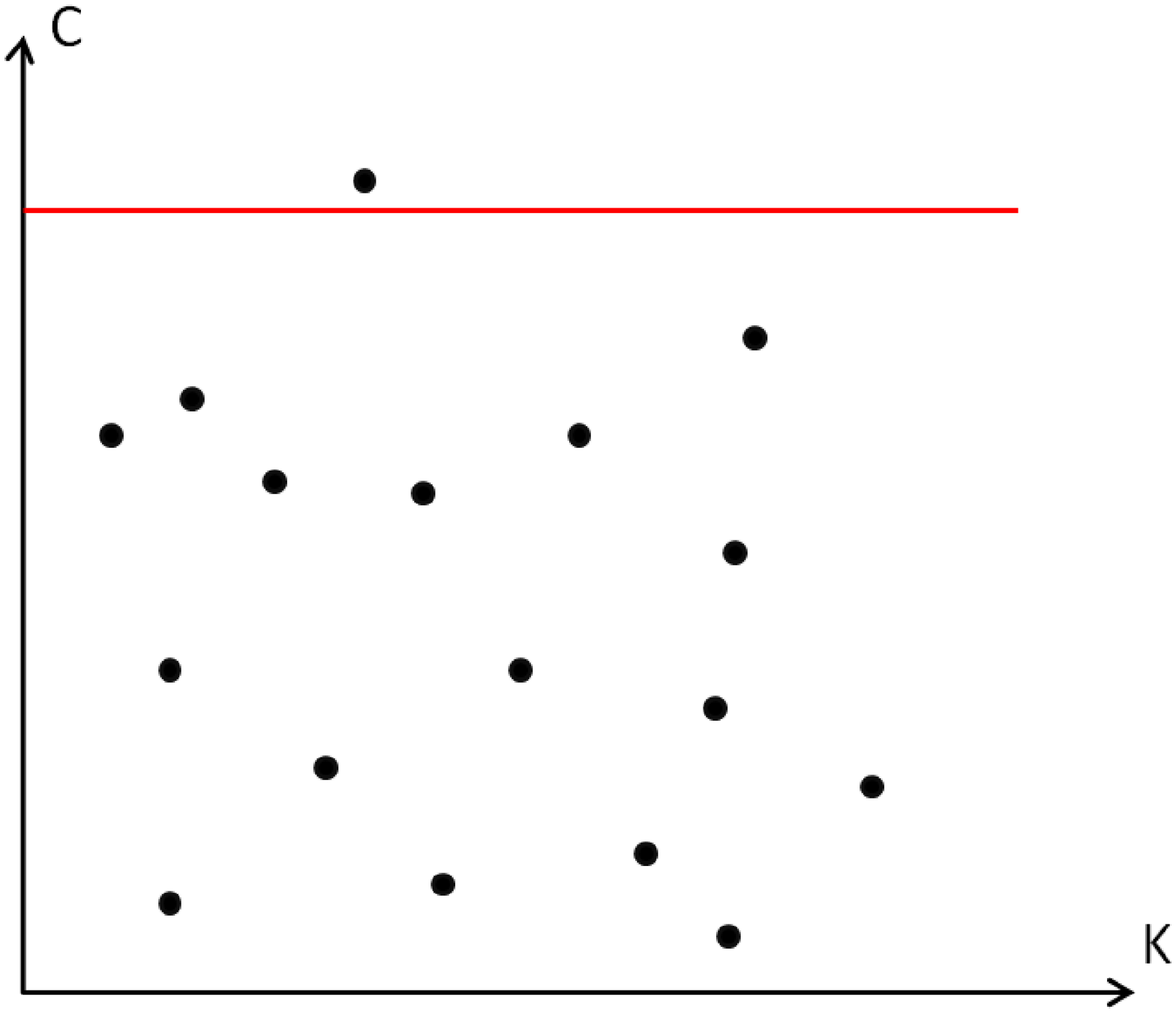}
\label{fig: GaussianExceedanceVsBlockMaxima1}
}
\subfigure[]{
\includegraphics[scale=0.28]{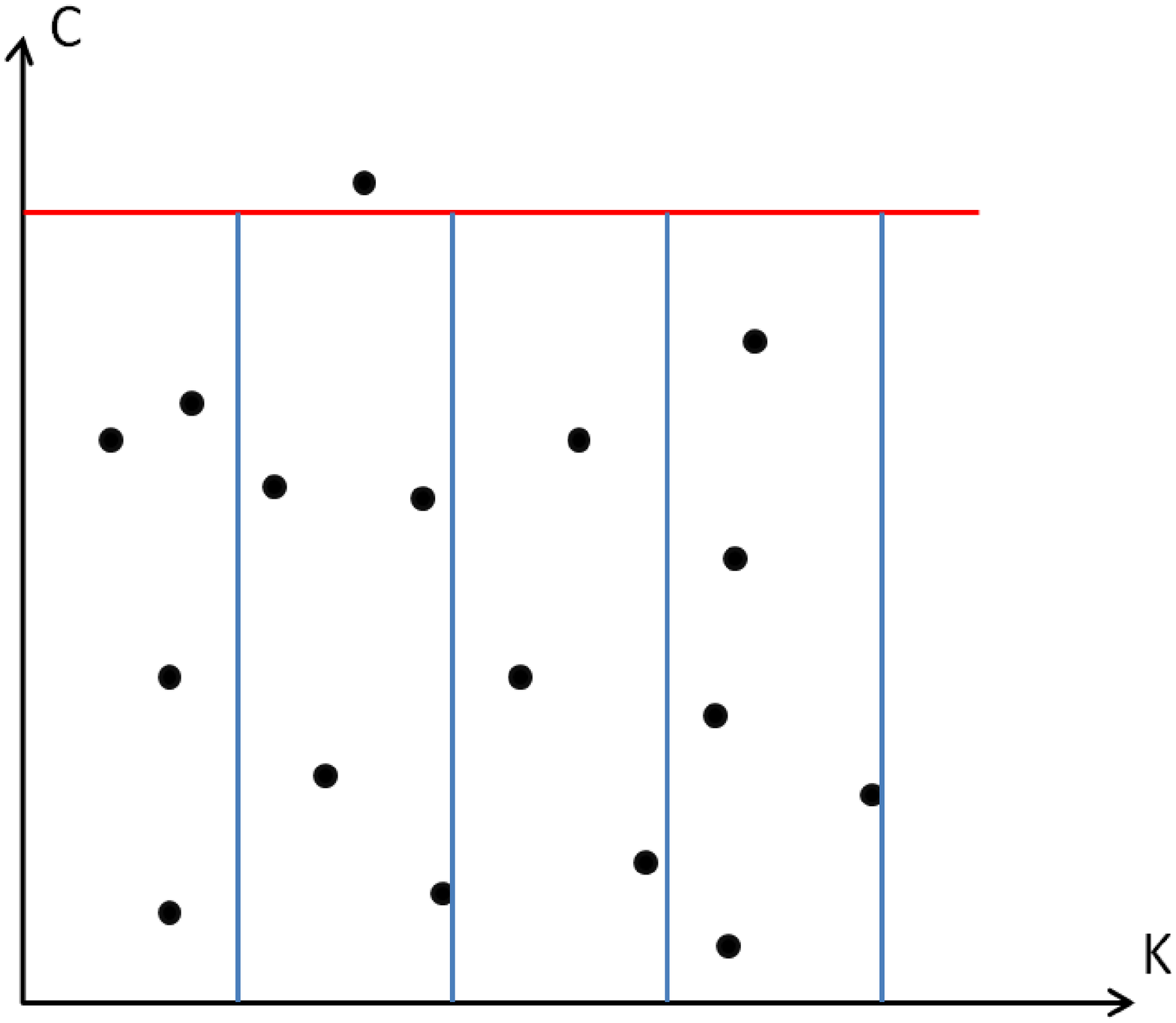}
\label{fig: GaussianExceedanceVsBlockMaxima2}
}
\caption[]{(a) $k=1$ users exceed a threshold out of $K$ observations. (b) Partitioning to $\sqrt{K}$ bins, such that in each bin there is approximately $\sqrt{K}$ users, and among this maximal users we set a threshold such that on average only the largest $k$ maximal users will exceed that threshold.}
\end{figure}

\begin{figure}
\centering
  \includegraphics[scale=1.15]{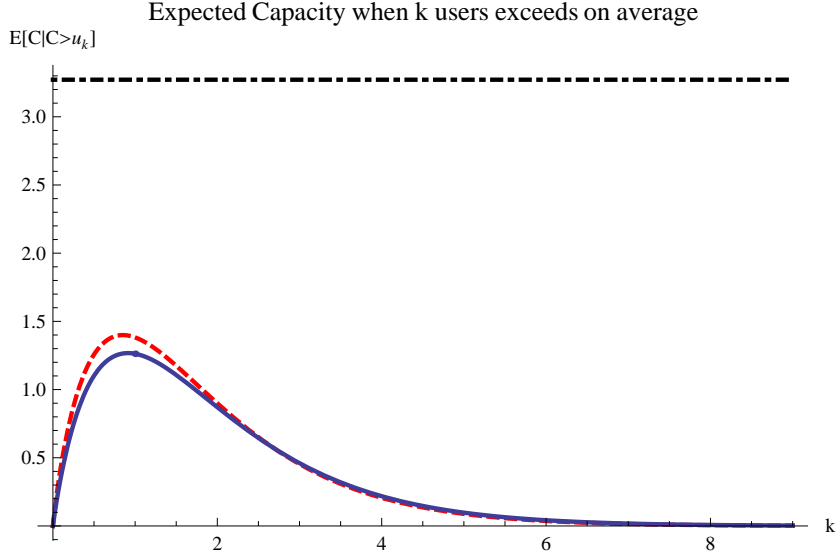}
  \caption{Threshold algorithm expected capacity gain for $K=1000$ users, when setting threshold such that $k$ users exceed on average by (\ref{eqn: estimated u_p Gaussian})(solid line) and by (\ref{eqn: estimated u_p Gumbel}) (dashed  line), comparing to the expected capacity of the optimal multi-user diversity centralized scheme (dot-dashed line).}\label{fig: thr alg unenhanced}
\end{figure}

In \cite[Proposition~4]{qin2003exploiting} it is shown that the optimal threshold (maximum throughput) is obtained by demanding that only one user exceeds on average. This is also clear from Figure~\ref{fig: thr alg unenhanced}, for both threshold estimators.

\subsection{Threshold Arrival Rate Point Process Approximation}



In this section we  discusses the rate at which users pass the threshold. That is, for a given threshold, we examine the average number of users that exceed the threshold in a single slot.

Assume that $\textbf{x}_1,...,\textbf{x}_n$ is a sequence of $i.i.d$ random variables with a distribution function $F(x)$, such that $F(x)$ is in the domain of attraction of some GEV distribution G, with normalizing constants $a_n$ and $b_n$.\\
We construct a sequence of points $P_1,P_2,...$ on $[0,1] \times \R$ by
$$ P_n = \left\{ \left(\frac{i}{n},\frac{\textbf{x}_i - b_n}{a_n}\right) , i=1,2,...,n \right\},$$
and examine the limit process, as $n \rightarrow \infty$.

Notice that the numbers of occurrences counted in disjoint intervals are independent from each other, and large points of the process are retained in the limit process, whereas all points $x_i = o(b_n)$ can be normalized to same floor value $b_l$.
\begin{figure}[t]
\centering
\subfigure[]{
\includegraphics[scale=0.75]{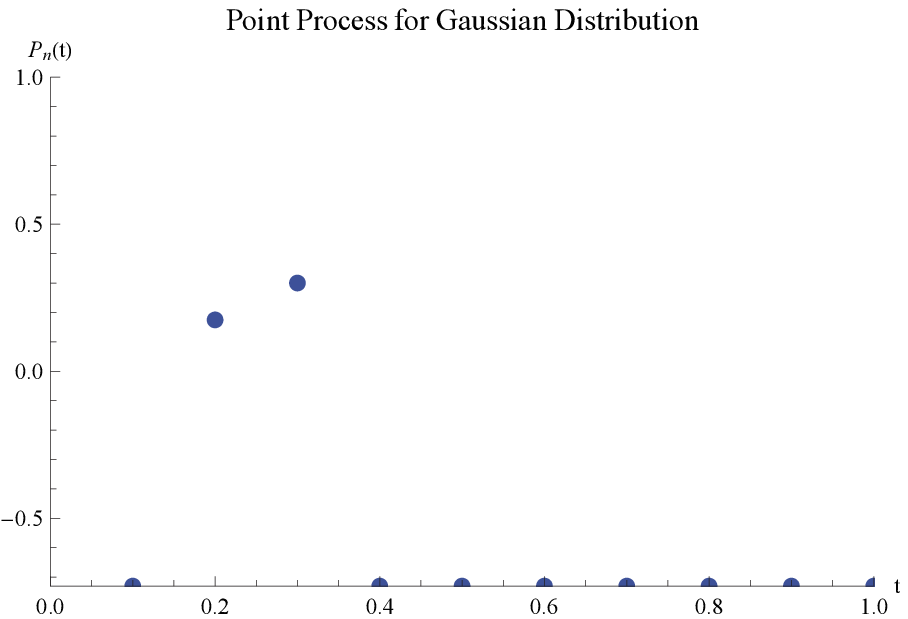}
\label{fig: point_process10}
}
\subfigure[]{
\includegraphics[scale=0.75]{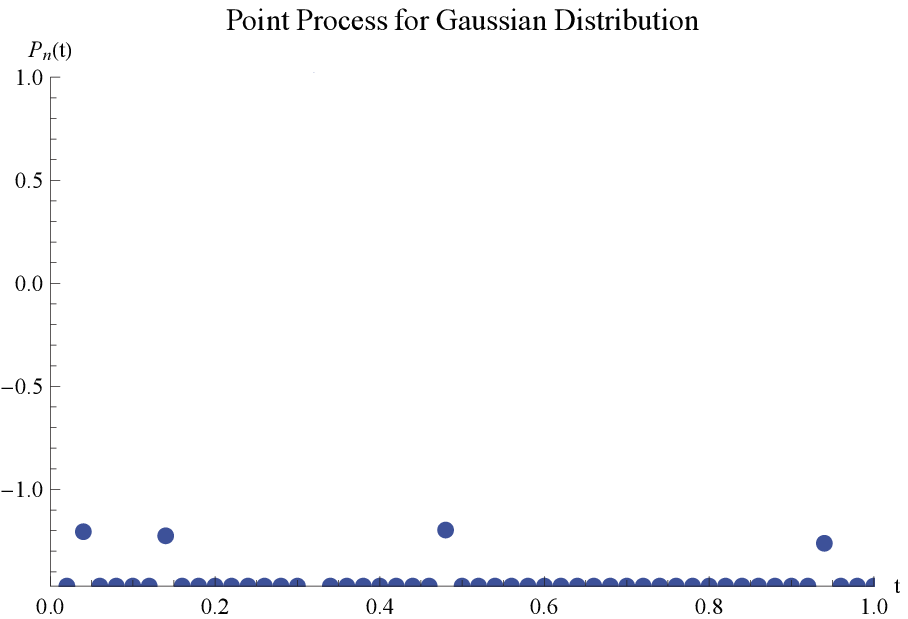}
\label{fig: point_process50}
}
\subfigure[]{
\includegraphics[scale=0.75]{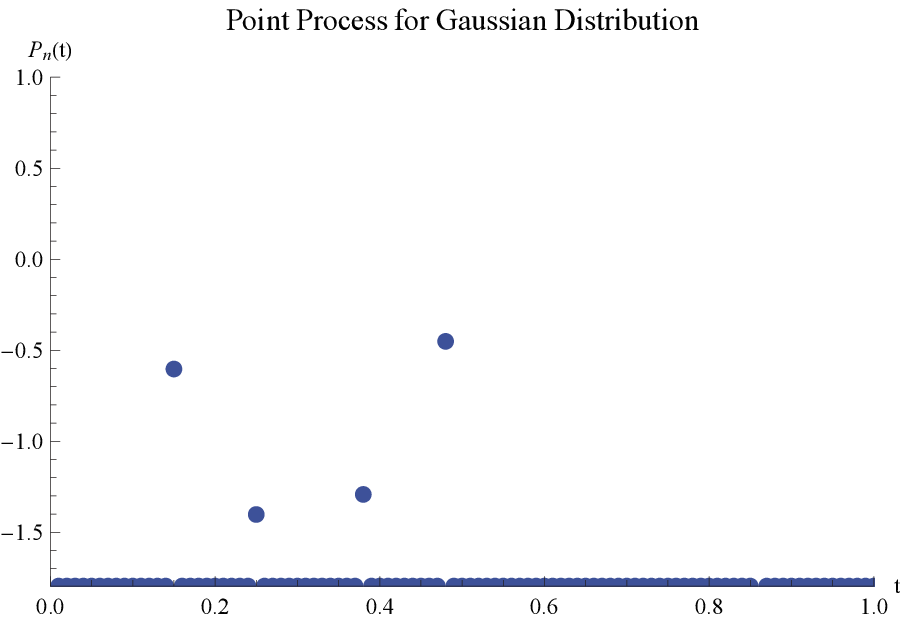}
\label{fig: point_process100}
}
\subfigure[]{
\includegraphics[scale=0.75]{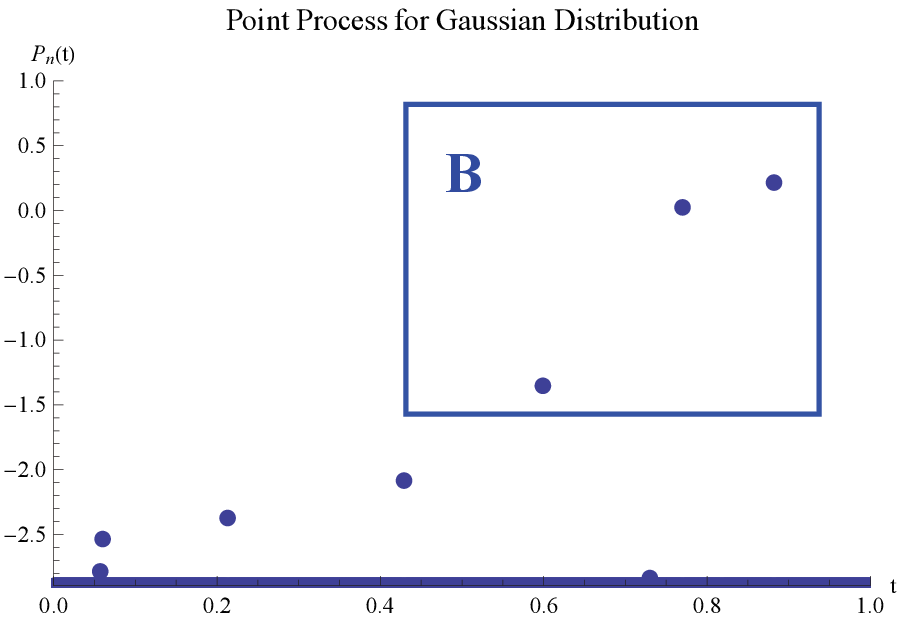}
\label{fig: point_process}
}
\caption[]{Point process for Gaussian distribution with $K\in \{10,50,100,1000\}$ users, in which all samples are normalized with $a_K,b_K$ constants. As we can see, in each of process, only a small fraction of users are above the threshold. In particular, we  obtain the expected number of arrivals to set $B$ by using (\ref{eqn: Lambda B}).  }
\end{figure}


\begin{theorem}[\cite{galambos1994extreme,smith1989extreme,eastoem453}]\label{theorem: point process}
Consider $P_n$ on the set $[0,1] \times (b_l + \epsilon , \infty)$, where $\epsilon > 0$, then
$$ P_n \longrightarrow P \textmd{ as } n \rightarrow \infty $$
where $P$ is a non-homogeneous Poisson process with intensity density
$$ \lambda(t,x) = (1 + \xi x)_{+}^{-\frac{1}{\xi} - 1}$$
where $x$ is the sample value, and $t$ is the index of occurrence.
\end{theorem}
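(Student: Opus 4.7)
The plan is to reduce convergence to a Poisson process to two classical sufficient conditions on product rectangles --- mean-measure convergence and void-probability convergence --- and then invoke Kallenberg's theorem to extend to all bounded Borel sets.

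First, I would fix a rectangle $B = (s_1, s_2] \times (x, \infty)$ inside $[0,1] \times (b_l + \epsilon, \infty)$. Because the $\textbf{x}_i$ are i.i.d., the count $P_n(B)$ is $\mathrm{Binomial}(m_n, p_n)$ with $m_n = \lfloor n s_2 \rfloor - \lfloor n s_1 \rfloor$ and $p_n = 1 - F(a_n x + b_n)$. The EVT hypothesis $F(a_n x + b_n)^n \to G(x)$ from Theorem~1(i) gives $n p_n \to -\log G(x) = (1+\xi x)^{-1/\xi}$, so
\[
E[P_n(B)] = m_n p_n \longrightarrow (s_2 - s_1)(1+\xi x)^{-1/\xi} = \iint_B \lambda(t,y)\,dt\,dy,
\]
using the elementary computation $\int_x^\infty (1+\xi y)^{-1/\xi - 1}\,dy = (1+\xi x)^{-1/\xi}$. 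This identifies the limit mean measure with the integral of the density declared in the statement.

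Next, the classical Poisson limit theorem yields that $\mathrm{Binomial}(m_n, p_n)$ converges in distribution to $\mathrm{Poisson}(\mu)$ whenever $m_n p_n \to \mu$ and $p_n \to 0$; in particular
\[
\Pr(P_n(B) = 0) = (1 - p_n)^{m_n} \longrightarrow \exp\{-(s_2 - s_1)(1+\xi x)^{-1/\xi}\}.
\]
For any finite collection of rectangles with pairwise disjoint time projections, the corresponding counts depend on disjoint blocks of the i.i.d.\ sample and are therefore exactly independent for every $n$, so the joint void probability factorizes in the limit into the Poisson-process void probability $\exp\{-\sum_j \Lambda(B_j)\}$. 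Kallenberg's criterion then lifts this to weak convergence of $P_n$ to the Poisson process with intensity $\lambda$: on a Polish carrier space, convergence of mean measures and void probabilities on a dissecting semiring of relatively compact sets, combined with local finiteness of the limit, suffices. The restriction $x > b_l + \epsilon$ secures local finiteness, since $\lambda(t,x)$ has a non-integrable singularity at the left endpoint $b_l = -1/\xi$ of the GEV support.

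The main obstacle will be the passage from $F(a_n x + b_n)^n \to G(x)$ to the stronger assertion $n(1 - F(a_n x + b_n)) \to (1+\xi x)^{-1/\xi}$ uniformly on compact subsets of $(b_l + \epsilon, \infty)$. This is a standard but subtle step, following from the convergence-of-types argument together with monotonicity of $F$ and continuity of $-\log G$ on the interior of its support. Once this uniform convergence is in hand, every subsequent step --- including the extension from rectangles to arbitrary measurable sets via Kallenberg --- is routine bookkeeping.
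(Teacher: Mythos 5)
Your proposal is correct and follows essentially the same route as the paper's Appendix~C: both verify convergence of the mean measure and of the void probabilities to those of the limiting Poisson process. The difference is one of completeness rather than method --- the paper checks these two conditions only on strips $(0,1]\times(v,\infty)$ and \emph{assumes} independence of counts on disjoint sets, whereas you work with general time-rectangles $(s_1,s_2]\times(x,\infty)$, obtain independence exactly from the i.i.d.\ block structure, and explicitly invoke Kallenberg's criterion to lift the rectangle computations to weak convergence of the process, which is the step the paper leaves implicit.
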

In the case where all the users are i.i.d., the process intensity density $\lambda(t,x)$ is independent in the index of occurrence $t$. For completeness, a proof in Appendix~\ref{sec: appendix C}.\\
Let $\Lambda(B)$ be the expected number of points in the set $B$. $\Lambda(B)$  can be obtained by integrating the intensity of the Poisson process over $B$, That is
\begin{equation}\label{eqn: Lambda B}
 \Lambda(B) = \int_{b \in B}{\lambda(b)db}.
\end{equation}
In this paper we are mainly interested in sets of the form
$$B_v = [0,1] \times (v,\infty)$$
where $v>b_l$. In this case
\begin{eqnarray*}
\Lambda(B_v) &=& \Lambda([0,1]\times (v,\infty))\\
&=& \int_{t=0}^1 \int_{x=v}^{\infty} \lambda(t,x)dxdt\\
&=& \int_{t=0}^1 \left[-(1+\xi x)_{+}^{-1/\xi}\right]_{x=v}^{\infty}\\
&=& \int_{t=0}^1 \left(1+\xi v\right)_{+}^{-1/\xi}dt\\
&=&(1 + \xi v)_{+}^{-1/\xi}
\end{eqnarray*}
where $a_{+}$ denotes $\max \{0,a\}$.

That is, occurrences of above threshold capacities can be modeled by a Poisson process, with parameter $\Lambda\left(B_v\right)$.
Namely, users normalized capacities exceed the threshold $v$ continuously and independently at a constant average rate $\Lambda(B_v)$.
In Figure~\ref{fig: point_process10}-\ref{fig: point_process} we observe the convergence of the point process to a continues process, that is, the Poisson process.
This enables us to examine important events, e.g., how likely it is to have several threshold exceedances, or what is the expected distance that users reach from the threshold. Thus, analyze the expected capacity.


\subsection{Tail Distribution}
Focusing on points of the process $P_n$ that are above a threshold, we wish to examine the distribution of the distance that they reached from the threshold, that is the excess capacity above the threshold. 

For any fixed $v > b_l$ let
$$u(v) = a_n v + b_n,$$
and let $x>0$, then
\begin{eqnarray*}
\Pr\left(\textbf{x}_i > a_n x+ u(v)| \textbf{x}_i > u(v)\right)
  &=&
\Pr\left(\frac{\textbf{x}_i - b_n}{a_n}> x + v | \frac{\textbf{x}_i - b_n}{a_n} > v\right) \\
   &=& \Pr(P_n(t) > x + v|P_n(t) > v) \\
   &\rightarrow& \Pr(P(t) > x + v|P(t) > v)
\end{eqnarray*}
where  $P_n(t)$ and $P(t)$ are the corresponding excess value $\frac{\textbf{x}_i - b_n}{a_n}$ at index $t$, and the corresponding excess value at time $t$ in the limit process, respectively. The last step is obtained from the convergence in distribution shown in  Theorem~\ref{theorem: point process}.
Now,
\begin{eqnarray}\label{eqn: Pareto def}
	  \Pr(P(t) > x + v|P(t) > v) &=& \frac{\Lambda(B_{x+v})}{\Lambda(B_v)}\\
   &=& \left[\left(1+ \xi \frac{x}{1+\xi v}\right)_{+}\right]^{-1/\xi}\nonumber\\
   &=& \left[\left(1+\xi \frac{x}{\sigma_v}\right)_{+}\right]^{-1/\xi}\nonumber
\end{eqnarray}
where $\sigma_v = 1 + \xi v $.
Hence, the limiting distribution for large threshold
$$\Pr\left(\textbf{x}_i > u(v)+a_n x| \textbf{x}_i > u(v)\right)$$
follows generalized Pareto distribution, $GPD(a_n \sigma_v,\xi)$.

Note that for the Gaussian case $\xi \rightarrow 0 $, and (\ref{eqn: Pareto def}) reduces to
\begin{equation}\label{eqn: Exponential CDF}
    \Pr(\textbf{x} - u(v) \leq \alpha | \textbf{x} - u(v)>0) = 1 - e^{-\frac{\alpha}{a_n}}
\end{equation}
for all $\alpha \geq 0$.

Thus, the Gaussian distribution tail is well approximated by an exponential distribution with rate parameter $\lambda = 1/a_n$, as shown in Figure~\ref{fig: thresholdExceedance}. As a result, by taking expected value on the capacity tail distribution we obtain the corollary, which is exactly (\ref{eqn: threshold expected capacity}).
\begin{corollary}\label{coro: expected capacity above est by GEV thr}
The expected capacity seen by a user who passed the threshold $u_k$, where $k$ is the expected number of users to exceed $u_{k}$ out of $K$ users, is
\begin{eqnarray*}\label{eqn: p threshold expected capacity}
    E[\mathcal{C}|\mathcal{C}> u_{k} ] &=& u_{k} + a_K + o\left(a_{K}\right).
\end{eqnarray*}
\end{corollary}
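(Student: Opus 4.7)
The plan is to obtain the corollary as an essentially immediate consequence of the tail distribution calculation that produced equation~(\ref{eqn: Exponential CDF}), specialized to the Gaussian ancestor distribution of the capacity. The heavy lifting has already been done: the point-process approximation of Theorem~\ref{theorem: point process} applied to the normalized i.i.d.\ Gaussian capacities with constants $a_K,b_K$ from~(\ref{eqn: a_n with mu and sigma})--(\ref{eqn: b_n with mu and sigma}) yields $\xi\to 0$, so the conditional excess distribution above a high threshold is exponential. Nothing new is needed in terms of EVT; the task is just to translate that tail statement into a statement about the mean.

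First I would write the threshold $u_k$ in the form $u_k = a_K v + b_K$ for some $v > b_l$; this is legitimate because $k\ll K$ guarantees that $u_k$ lies well inside the upper tail regime where~(\ref{eqn: Exponential CDF}) applies. Then, setting $n=K$ in~(\ref{eqn: Exponential CDF}), the conditional law of $\mathcal{C} - u_k$ given $\mathcal{C} > u_k$ converges to $\mathrm{Exp}(1/a_K)$, whose mean is exactly $a_K$. A direct mean computation then gives
\begin{equation*}
E[\mathcal{C} - u_k \mid \mathcal{C} > u_k] \;=\; a_K + o(a_K),
\end{equation*}
and rearranging delivers the claimed formula $E[\mathcal{C}\mid \mathcal{C}>u_k] = u_k + a_K + o(a_K)$.

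The one non-routine point, and the step I would treat most carefully, is justifying that the approximation error in the \emph{distribution} (\ref{eqn: Exponential CDF}) indeed produces only an $o(a_K)$ error in the \emph{expectation}. Convergence in distribution alone does not automatically transfer to convergence of means; we need uniform integrability of the family $\{(\mathcal{C}-u_k)/a_K : \mathcal{C}>u_k\}$ indexed by $K$. For a Gaussian ancestor this is straightforward because the Mill's ratio bound $1-\Phi(u)\le \varphi(u)/u$ makes the conditional tail decay at least as fast as the limiting exponential, uniformly in $K$, so any moment of order $>1$ is bounded uniformly and dominated convergence applies. This argument, together with the Gumbel approximation error quantified in Appendix~\ref{sec: appendix A}, is what justifies writing the residual as $o(a_K)$ rather than merely $O(a_K)$.
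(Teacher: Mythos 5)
Your proposal follows essentially the same route as the paper: the paper derives the exponential tail law (\ref{eqn: Exponential CDF}) from the point-process limit with $\xi \to 0$ and then obtains the corollary by ``taking expected value on the capacity tail distribution,'' which is exactly your computation of the mean of $\mathrm{Exp}(1/a_K)$. Your write-up is in fact more careful than the paper's, since you supply the uniform-integrability argument (via the Mill's ratio bound $1-\Phi(u)\le \varphi(u)/u$) needed to pass from convergence in distribution to convergence of means and to justify the $o(a_K)$ error term --- a step the paper leaves implicit.
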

\begin{figure}
\centering
  \includegraphics[scale=1.25]{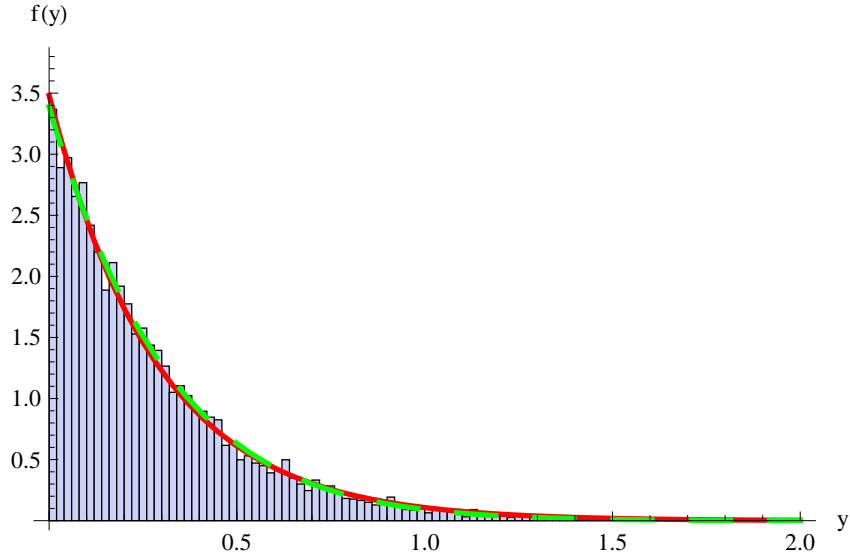}
  \caption{Tail of Gaussian distribution, statistics of 11722 observation out of 50000000 that exceed threshold 3.5, which is $ \approx 1-\Phi(3.5)$ of the observations. Dashed line is obtained by analyzing conditional distribution of Gaussian capacity given that capacity is above threshold. the solid line obtained from (\ref{eqn: Exponential CDF}). In both the threshold was derived from (\ref{eqn: estimated u_p Gumbel}) , }\label{fig: thresholdExceedance}
\end{figure}

\subsection{Throughput Analysis}
As mentioned, we say that a slot is utilized only if a single user transmits in that slot. If more than one user exceeds the threshold in a slot, or no user exceeds the threshold in  a slot, then the whole slot is lost. To address these scenarios we will offer a subtle collision avoidance algorithm in the following section. Using the point process method those events are very easy to analyze as we see in Figure~\ref{fig: idle-slot} and Figure~\ref{fig: collision-slot}.
\begin{figure}[t]
\centering
\subfigure[]{
\includegraphics[scale=0.7]{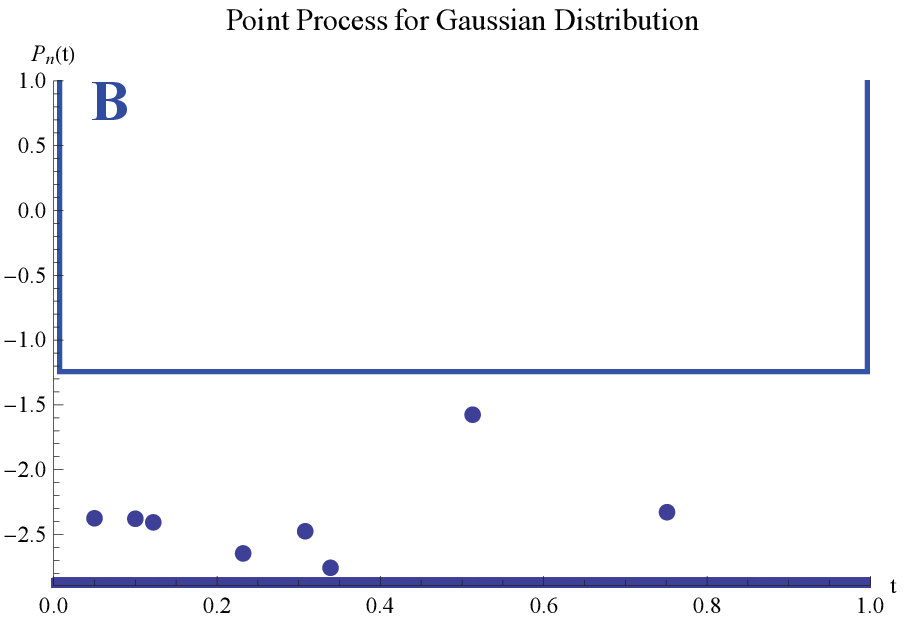}
\label{fig: idle-slot}
}
\subfigure[]{
\includegraphics[scale=0.7]{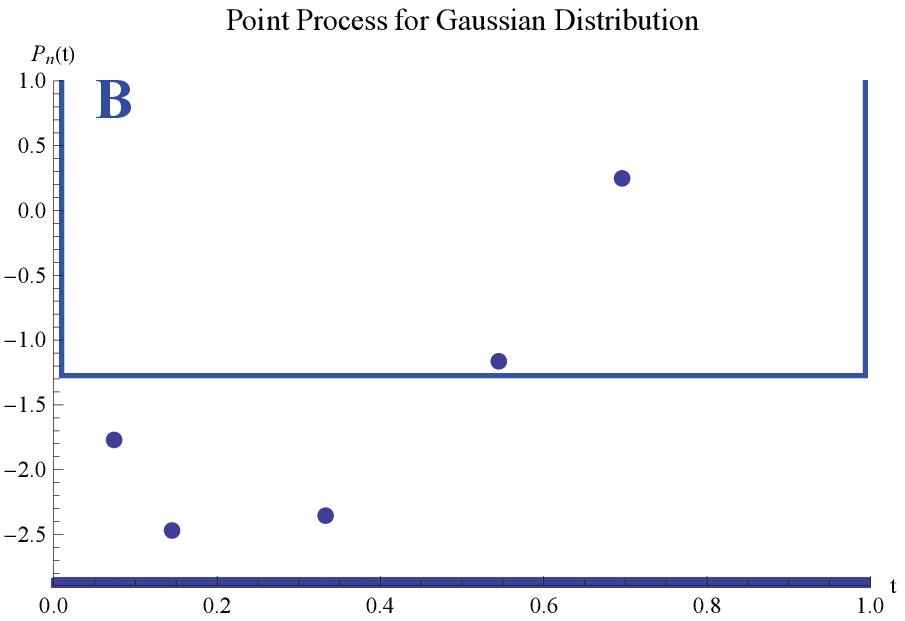}
\label{fig: collision-slot}
}
\caption[]{(a) Idle slot in point process point of view. (b) Collision slot in point process point of view. }
\end{figure}

\begin{claim}\label{claim: utilized slot probability}
For a threshold $u_k$ we have:
\begin{eqnarray}\label{eqn: utilized slot probability}
\Pr\left(\textmd{ utilized slot }\right) &=&  k e^{-k}.
\end{eqnarray}
\end{claim}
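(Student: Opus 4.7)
The plan is to reduce the claim directly to the Poisson approximation obtained from the point process result. By the definition in the paragraph preceding the claim, a slot is "utilized" precisely when exactly one user's capacity exceeds the threshold $u_k$; so I need to show $\Pr(\text{exactly one of the } K \text{ i.i.d.\ user capacities exceeds } u_k) = k e^{-k}$.

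First, I would translate the threshold into the normalized scale used by the point process. Write $u_k = a_K v + b_K$, so $v = (u_k - b_K)/a_K$. Then the $i$-th user's normalized capacity $(\mathbf{x}_i - b_K)/a_K$ exceeds $v$ iff $\mathbf{x}_i > u_k$, and so the number of threshold exceedances in the slot equals the number of points of $P_K$ falling in the set $B_v = [0,1] \times (v,\infty)$.

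Next, I would invoke Theorem~\ref{theorem: point process}: as $K \to \infty$, $P_K \to P$, a non-homogeneous Poisson process, and hence the count $N(B_v)$ converges in distribution to a Poisson random variable with mean $\Lambda(B_v) = (1+\xi v)_+^{-1/\xi}$. By construction of $u_k$ (Claims~\ref{claim: expected capacity above est by ierf thr} and \ref{prop: expected capacity above est by GEV thr}), the threshold is chosen so that exactly $k$ users exceed it on average, i.e., $\Lambda(B_v) = k$. (In the Gaussian case $\xi \to 0$ this reads $e^{-v} = k/K \cdot K = k$ after the appropriate scaling.) Therefore, for large $K$,
\begin{equation*}
\Pr(\text{utilized slot}) \;=\; \Pr(N(B_v) = 1) \;\longrightarrow\; \frac{\Lambda(B_v)^{1} e^{-\Lambda(B_v)}}{1!} \;=\; k e^{-k},
\end{equation*}
which is the desired expression.

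The main obstacle, and the only nontrivial bookkeeping, is to justify that the count $N(B_v)$ can legitimately be treated as Poisson with parameter exactly $k$: this requires both the weak convergence of $P_K$ on sets bounded away from the floor $b_l$ (which Theorem~\ref{theorem: point process} provides) and the consistency between the exact defining property of $u_k$ (expected number of exceedances equals $k$ for finite $K$) and the asymptotic intensity $\Lambda(B_v)$. Once these are aligned, the claim follows by evaluating the Poisson mass function at $n=1$; no additional calculation beyond plugging $\Lambda(B_v)=k$ into $\Lambda^n e^{-\Lambda}/n!$ is needed.
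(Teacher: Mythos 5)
Your proof is correct, but it takes a genuinely different route from the paper's own. The paper proves Claim~\ref{claim: utilized slot probability} by an elementary finite-$K$ computation: since $u_k$ is defined by $1-\Phi(u_k)=k/K$, the number of exceedances is exactly Binomial$(K,k/K)$; the paper computes $\Pr(\textmd{ collision })\to 1-e^{-k}(k+1)$ in (\ref{eqn: collision probability}) and $\Pr(\textmd{ idle slot })\to e^{-k}$ in (\ref{eqn: idle time slot analysis}), and subtracts both from $1$ to obtain $ke^{-k}$. You instead invoke the point-process convergence of Theorem~\ref{theorem: point process} and read off the Poisson mass at $1$ with mean $\Lambda(B_v)=k$. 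The trade-off is as follows. The paper's route is self-contained and sidesteps the alignment issue you correctly flag: in the binomial computation the success probability is exactly $k/K$ by construction, so no consistency check between the finite-$K$ threshold and the asymptotic intensity is needed, whereas in your route $v=(u_k-b_K)/a_K$ depends on $K$ and one must verify $v\to-\log k$ (equivalently $\Lambda(B_v)\to k$) using the expansion of $u_k$ from Claim~\ref{claim: expected capacity above est by ierf thr}; this does hold, but it is precisely the nontrivial bookkeeping you defer rather than carry out. What your route buys is uniformity with the paper's broader methodology: it is exactly the argument the paper itself deploys in the heterogeneous setting (Theorem~\ref{theorem: expected capacity non-uniform users}, Claims~\ref{claim: expected capacity non iid} and \ref{claim: unutilized slot probability non iid}), where users have distinct rates $\Lambda_i$ and the clean binomial factorization is unavailable. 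In that sense your proof of the i.i.d.\ case is the specialization of the general Poisson-count machinery, while the paper's is an independent elementary derivation of the same limit.
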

\begin{proof}
The probability that more than two out of $K$ users will exceed $u_k$ follows
\begin{eqnarray}\label{eqn: collision probability}
    \Pr( \textmd{ collision } )
    &=& \sum_{j=2}^{K}\binom{K}{j}\left(1-\Phi(u_k)\right)^{j} \left(\Phi(u_k)\right)^{K-j}\\
    &=& 1 - \left[\left(1 - \frac{k}{K}\right)^{K}  + K\left(\frac{k}{K}\right)\left(1 - \frac{k}{K}\right)^{K-1}\right]\nonumber\\
    &\stackrel{K \rightarrow \infty}{\longrightarrow}& 1 - e^{-k}(k+1)\nonumber.
\end{eqnarray}
This also implies that the number of users exceeding the threshold follows the Binomial distribution with parameters $B(K,\frac{k}{K})$, hence, converges towards the Poisson distribution as  $K$ goes to infinity.

Similarly, under the same settings, the probability of an idle slot is
\begin{eqnarray}\label{eqn: idle time slot analysis}
    \Pr\left\{\textmd{ idle slot } \right\} &=&
    \left(1 - \frac{k}{K}\right)^{K}\\
    &\stackrel{K\rightarrow \infty}{\longrightarrow}& e^{-k}\nonumber.
\end{eqnarray}
Since
$$\Pr\left(\textmd{ utilized slot }\right) = 1 - \Pr \left( \textmd{ idle slot }\bigcup \textmd{ collision } \right) $$
Claim~\ref{claim: utilized slot probability} follows.
\end{proof}
In particular, (\ref{eqn: idle time slot analysis}) implies that the system will be idle $e^{-1}$ of the time when setting the optimal threshold, which is a threshold such that a single user exceeds the threshold on average \cite[Proposition~4]{qin2003exploiting}.

Proposition~\ref{prop: C av uniform users} now follows from Claim~\ref{claim: utilized slot probability} and Corollary~\ref{coro: expected capacity above est by GEV thr}.

\begin{remark}
It is interesting to see that the GEV distribution given in equation (\ref{eqn: G def}) can be derived from the Point process approximation we use in this paper.

 To see this, set a threshold $u$, and for each random variable $\textbf{x}_i$ define
 $$\textbf{y}_i =  \mathbbm{1}_{\left\{\frac{\textbf{x}_i - b_n}{a_n} > u\right\}},$$
 where $\mathbbm{1}_{\{\cdot\}}$ is the indicator function.
 We have
\begin{eqnarray*}
\lim_{n \to \infty}\Pr\left\{\max_i \textbf{x}_i \leq a_n u +b_n\right\} &=& \lim_{n \to \infty}\Pr\left\{\frac{\textbf{x}_i-b_n}{a_n} \leq u \textmd{ for all }i\right\}\\
&=& \lim_{n \to \infty}\Pr\left\{\sum_i\textbf{y}_i=0\right\}\\
&=& \exp\left\{-\Lambda(B_u) \right\}\\
&=& \exp\left\{-\left(1+\xi u\right)^{-1/\xi} \right\}.
\end{eqnarray*}
The $r-largest$ users can be obtained in a similar way.
\end{remark}

\section{Heterogeneous Users}\label{sec. hetero}
We are now ready to address the main problem in this work. Specifically, in this section we assume that each user may be located at a different location, experiencing attenuation, delay and phase shift with \emph{different statistics} compared to other users. In our setting, the different statistics are reflected in different mean and variance of the capacity. Since users are now non-uniform, previous methods of EVT, e.g. those used in \cite{choi2008capacity}, do not apply directly. However, using the Point of Process approximation derived in the previous section with subtle modification, enable us to analyze this model and the distributed threshold scheme.

 From now on, we assume the $i$-th user capacity follows a Gaussian distribution with mean $\mu_i$ and variance $\sigma_{i}^{2}$.
Let $C_{av}^{nu}(u)$ denote the expected capacity in this non-uniform environment. Our main result is the following.
\begin{theorem}\label{theorem: expected capacity non-uniform users}
The expected capacity when working with a single user in each slot in the above non-uniform environment, where the $i_{th}$ user capacity is approximated with mean $\mu_i$ and variance $\sigma^{2}_{i}$, follows
\begin{equation}\label{eqn: threshold capacity throuput nonuniform}
     C_{av}^{nu}(u) =
      \frac{1}{K}\Lambda_T e^{-\frac{1}{K} \Lambda_T } \sum_{i=1}^{K}\frac{\Lambda_i}{\Lambda_T}
   \left(u + \sigma_i a_K + o(a_K)\right)\nonumber
\end{equation}
where
\begin{equation}\label{eqn: lambda i}
    \Lambda_i =  e^{-\frac{u - (\sigma_i b_K + \mu_i)}{\sigma_i a_K}}
\end{equation}
is the average threshold exceedance rate of the $i_{th}$ user, and
\begin{equation}\label{eqn: tilde lambda}
    \Lambda_T = \sum_{i=1}^K \Lambda_i
\end{equation}
is the total threshold exceedance rate.
$u$ is a threshold greater than zero that we set for all users, and $a_K,b_K$ follows (\ref{eqn: a_n normalized}) and (\ref{eqn: b_n normalized}) respectively.
\end{theorem}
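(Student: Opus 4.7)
The plan is to decompose the expected capacity via the law of total expectation over which single user wins the slot, namely
\[ C_{av}^{nu}(u) \;=\; \sum_{i=1}^K \Pr\bigl(\text{only user } i \text{ exceeds } u\bigr)\, E\bigl[C_i \mid C_i>u\bigr], \]
and to evaluate each factor using the point process machinery of Section~\ref{sec. dist} applied \emph{per user}. Because user $i$'s capacity is Gaussian with mean $\mu_i$ and variance $\sigma_i^2$, its normalizing constants become $\sigma_i a_K+\mu_i$ and $\sigma_i b_K+\mu_i$ by \eqref{eqn: a_n with mu and sigma}--\eqref{eqn: b_n with mu and sigma}, so the analogue of the intensity $\Lambda(B_v)$ of Theorem~\ref{theorem: point process}, for user $i$ at the common threshold $u$, is precisely the rate $\Lambda_i$ defined in \eqref{eqn: lambda i}.

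For the first factor I would use that, by Gaussian EVT, $\Pr(C_i>u)\approx \Lambda_i/K$; then by independence of the $K$ users,
\[ \Pr\bigl(\text{only user } i \text{ exceeds}\bigr) \;=\; \Pr(C_i>u)\prod_{j\neq i}\Pr(C_j\leq u) \;\approx\; \frac{\Lambda_i}{K}\prod_{j\neq i}\Bigl(1-\frac{\Lambda_j}{K}\Bigr), \]
and a Taylor expansion of $\log(1-\Lambda_j/K)$ shows the product tends to $e^{-\Lambda_T/K}$. Summing over $i$ yields a utilized-slot probability of $(\Lambda_T/K)\,e^{-\Lambda_T/K}$, which is the non-uniform generalization of Claim~\ref{claim: utilized slot probability} and equivalently amounts to a Le~Cam--type Poisson approximation to the Poisson-binomial exceedance count.

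For the second factor, by independence the joint conditioning ``user $i$ exceeds and the others do not'' reduces to $C_i>u$ alone. Applying the exponential tail \eqref{eqn: Exponential CDF} to user $i$, with rate $1/(\sigma_i a_K)$ in place of $1/a_K$, and integrating exactly as in Corollary~\ref{coro: expected capacity above est by GEV thr}, gives $E[C_i\mid C_i>u] = u + \sigma_i a_K + o(a_K)$. Substituting both ingredients back into the decomposition and factoring the common $(\Lambda_T/K)\,e^{-\Lambda_T/K}$ out of the sum produces \eqref{eqn: threshold capacity throuput nonuniform}.

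The hard part will be \emph{uniformity across $i$}. Because each user contributes only a single sample per slot but with its own $(\mu_i,\sigma_i)$, one must show that the $o(a_K)$ remainders in both the tail probability and the tail mean are controlled uniformly in $i$, so that their contributions summed over the $K$ terms still collapse to the single $o(a_K)$ claimed. Relatedly, the Poisson-binomial-to-Poisson step underlying $\prod_{j\neq i}(1-\Lambda_j/K)\approx e^{-\Lambda_T/K}$ requires $\max_i \Lambda_i / K \to 0$ as $K\to\infty$; this should follow from the use of a common threshold $u$ together with boundedness of the per-user parameters, but it needs to be verified rather than invoked.
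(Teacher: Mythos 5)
Your proposal is correct and is essentially the paper's own proof: the paper likewise decomposes the capacity into $\Pr(\text{utilized slot})\cdot E[C\mid C>u,\ \text{one exceedance}]$, obtains the per-user rates $\Lambda_i$ by applying the point-process intensity to each user with that user's normalizing constants, derives the factor $\frac{1}{K}\Lambda_T e^{-\frac{1}{K}\Lambda_T}$ and the weights $\Lambda_i/\Lambda_T$ from independence (via superposition of per-user Poisson processes rather than your Bernoulli-product/Le~Cam limit --- an immaterial difference), and credits each exceeding user $u+\sigma_i a_K+o(a_K)$ from the exponential tail, exactly as you do. One slip in your wording: the per-user scale constant is $\sigma_i a_K$, not $\sigma_i a_K+\mu_i$ (only the location constant acquires the shift $+\mu_i$), though your subsequent use of $\Lambda_i$ from (\ref{eqn: lambda i}) and of the exponential rate $1/(\sigma_i a_K)$ is consistent with the correct value; your closing caveat about uniformity of the $o(a_K)$ terms in $i$ is a genuine gap, but it is one the paper itself does not address either.
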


%

Note that similar to the uniform setting,
\begin{equation}\label{eqn: C av nu definition}
C_{av}^{nu}(u) = \Pr\left(\textmd{utilized slot}\right)E[C|C>u].
\end{equation}
Thus, in this non-uniform environment as well, we first analyze the expected capacity gain when letting a user with capacity greater than the threshold to utilize a slot, and then analyze the probability that a single user utilizes a slot. Note that the computation of $C_{av}^{nu}$ is different from the uniform case, since each user channel follows a different distribution, hence, the probabilities to exceed the threshold $u$ are different. Moreover, the tail distribution the users see are different. Thus, using the point process directly in non-uniform environment will not hold.

To obtain the approximating Poisson process for this non-uniform case, we use the following method.
We build a point process for each user from his own last $M$ slots capacity value. Following Theorem~\ref{theorem: point process}, the number of threshold exceedances each user experiences, in $M$ slots which are represented in a unit interval, follows a Poisson process with rate parameter
\begin{eqnarray*}
  \Lambda_i  &=& \lim_{\xi \rightarrow 0} \left(1+\xi \frac{u - (\sigma_i b_M + \mu_i)}{\sigma_i a_M}\right)^{-\frac{1}{\xi}}\\
   &=&  e^{-\frac{u - (\sigma_i b_M + \mu_i)}{\sigma_i a_M}},
\end{eqnarray*}
where $a_M$ and $b_M$ are given in (\ref{eqn: a_n normalized}) and (\ref{eqn: b_n normalized}), respectively.
Since all users are independent, and each user exceeds the threshold according to Poisson process with rate parameter $\Lambda_i$, the total number of threshold exceedances follows a Poisson process with rate parameter
$$\Lambda_T =\sum_{i=1}^K e^{-\frac{u - (\sigma_i b_M + \mu_i)}{\sigma_i a_M}}.$$
Now, set $M=K$ and consider a single slot interval, that is, an interval of length $1/K$ compared to the unit interval, in which the probability that a user exceed the threshold more than once is little order $o\left(1/K\right)$. Then, the total number of exceedance in this non-uniform environment follows
\begin{eqnarray}\label{eqn: non i.i.d. users point process}
    \Pr\left(\sum_{i=1}^K \textbf{N}_i=k\right) &=& \frac{\left(\frac{1}{K}\Lambda_T\right)^k}{k!}\exp\left\{-\frac{1}{K}\Lambda_T\right\}\nonumber
\end{eqnarray}
where $\textbf{N}_i$ is the number of exceedances of the $i_{th}$ user in $1/K$ time interval.
Note that the i.i.d.\ case can be obtained by placing $\sigma_i = \sigma$ and $\mu_i = \mu, \forall i=1,2,...,K$  in (\ref{eqn: lambda i}), achieving the expression in Claim~\ref{claim: utilized slot probability}.

In order to prove Theorem~\ref{theorem: expected capacity non-uniform users}, we first prove the two claims below.
\begin{claim}\label{claim: expected capacity non iid}
Given that a single threshold exceedance occurred, then the expected capacity for non-uniform users is
\begin{equation}\label{eqn: expected thr capacity non iid users}
  E[C|C > u, \sum_{i=1}^K \textbf{N}_i=1] =
   \sum_{i=1}^{K}\frac{\Lambda_i}{\Lambda_T}
  \left(u + \sigma_i a_K + o(a_K)\right) .\nonumber
\end{equation}
\end{claim}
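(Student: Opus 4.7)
The plan is to use the law of total expectation by conditioning on the identity of the unique user whose capacity exceeds the threshold $u$, then invoke the independent superposition structure of the per-user point processes together with the tail result that already gave Corollary~\ref{coro: expected capacity above est by GEV thr}.

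First I would write
\begin{equation*}
E\!\left[C\,\bigl|\,C>u,\;\sum_{i=1}^{K}\textbf{N}_i=1\right]
=\sum_{i=1}^{K} E\!\left[C\,\bigl|\,C>u,\textbf{N}_i=1,\textbf{N}_j=0\;\forall j\neq i\right]
\cdot\Pr\!\left(\textbf{N}_i=1,\textbf{N}_j=0\;\forall j\neq i\,\bigl|\,\sum_{\ell}\textbf{N}_{\ell}=1\right).
\end{equation*}
Since the $\textbf{N}_i$ are independent Poisson random variables with parameters $\Lambda_i/K$ on the single-slot interval, a standard Poisson thinning argument gives
\[
\Pr\!\left(\textbf{N}_i=1,\textbf{N}_j=0\;\forall j\neq i\,\bigl|\,\sum_{\ell}\textbf{N}_{\ell}=1\right)=\frac{\Lambda_i}{\Lambda_T},
\]
which is the weight appearing in the claim.

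Next I would evaluate the inner conditional expectation $E[C\mid C>u,\textbf{N}_i=1,\textbf{N}_j=0,\,j\neq i]$. Conditioned on this event, $C$ is the capacity of user $i$ conditioned on exceeding $u$, so by the general-Gaussian version of the tail result used in Corollary~\ref{coro: expected capacity above est by GEV thr} (namely the exponential tail (\ref{eqn: Exponential CDF}) with normalization $\sigma_i a_K$ from (\ref{eqn: a_n with mu and sigma})), one obtains
\[
E[C_i\mid C_i>u]=u+\sigma_i a_K+o(a_K).
\]
Substituting both pieces back into the total-expectation identity yields the claimed formula.

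The main obstacle is justifying the decoupling step: the per-user Poisson-process approximation from Theorem~\ref{theorem: point process} is derived for a \emph{single} i.i.d.\ sequence, so one must argue that the approximations remain valid after superposition and that independence of the users is preserved in the limit. A related subtlety is that each user's tail-exponential approximation carries its own $o(a_K)$ error, and these errors must be aggregated uniformly in $i$ so that the final $o(a_K)$ appearing inside the weighted sum is a genuine remainder. Once these two points are pinned down, the rest is a routine combination of Poisson thinning and the per-user tail-expectation identity.
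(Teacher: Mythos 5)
Your proposal is correct and follows essentially the same route as the paper: the paper likewise computes $\Pr\bigl(\textbf{N}_i=1 \,\bigl|\, \sum_j \textbf{N}_j=1\bigr)=\Lambda_i/\Lambda_T$ from the independence of the per-user Poisson counts, attributes a contribution of $u+\sigma_i a_K+o(a_K)$ to the exceeding user via the per-user tail result, and averages. The decoupling and error-aggregation subtleties you flag are real, but the paper itself does not treat them any more rigorously than you do.
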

\begin{proof}
In the limit of each user point process, $\textbf{N}_1,\textbf{N}_2,...,\textbf{N}_K$ are independent Poisson random variables with rate parameters $\Lambda_1, \Lambda_2,...,\Lambda_K$, respectively. Thus,
the probability that only the $i_{th}$ user exceeded threshold $u$ in $1/K$ interval length is
\begin{eqnarray}\label{eqn: i user exceedance probability given the exceedance occurred not conditional}
  \Pr\left(\textbf{N}_i = 1 , \sum_{j=1}^K \textbf{N}_j = 1 \right) &=& \frac{1}{K}\Lambda_i e^{-\frac{1}{K}\Lambda_i}\prod_{j \neq i}^K e^{-\frac{1}{K}\Lambda_j}\\
  &=& \frac{1}{K}\Lambda_i e^{-\frac{1}{K}\Lambda_T}\nonumber\\
  &=& \frac{1}{K}\Lambda_T e^{-\frac{1}{K}\Lambda_T} \frac{\Lambda_i}{\Lambda_T} \nonumber.
\end{eqnarray}
Hence,
\begin{eqnarray}\label{eqn: i user exceedance probability given the exceedance occurred}
  \Pr\left(\left.\textbf{N}_i = 1 \right| \sum_{j=1}^K \textbf{N}_j = 1 \right) &=&  \frac{\Pr\left(\textbf{N}_i = 1 , \sum_{j=1}^K \textbf{N}_j = 1 \right)}{\Pr\left(\sum_{j=1}^K \textbf{N}_j = 1 \right)}\\
  &=& \frac{\Lambda_i}{\Lambda_T}.\nonumber
\end{eqnarray}
By Proposition~\ref{prop: C av uniform users}, given that the $i_{th}$ user exceeded the threshold, this user contributes $\left(u + \sigma_i a_K +  o(a_K)\right)$
to the expected capacity. By averaging user contributions, Claim~\ref{claim: expected capacity non iid} follows.
\end{proof}
\begin{claim}\label{claim: unutilized slot probability non iid}
The probability of unutilized slot for non-uniform users follows
\begin{equation}\label{eqn: unutilized slot probability non iid}
    \Pr(\textmd{ unutilized slot }) = \exp\left\{- \frac{1}{K}\Lambda_T \right\}
       + \sum_{k=2}^{K} \frac{\left(\frac{1}{K}\Lambda_{T}\right)^{k}}{k!}\exp\left\{- \frac{1}{K}\Lambda_T \right\} .\nonumber
\end{equation}
\end{claim}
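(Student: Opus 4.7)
The plan is to decompose the unutilized-slot event into its two disjoint parts, namely the idle-slot event (no user exceeds the threshold) and the collision event (two or more users exceed the threshold), and then apply the Poisson approximation that was already established for the total number of threshold exceedances in a slot of length $1/K$.

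More precisely, I would first invoke the displayed distributional identity
\[
\Pr\left(\sum_{i=1}^K \mathbf{N}_i = k\right) = \frac{\left(\tfrac{1}{K}\Lambda_T\right)^k}{k!}\exp\left\{-\tfrac{1}{K}\Lambda_T\right\},
\]
which is the summary consequence of the earlier per-user construction: each $\mathbf{N}_i$ is approximately Poisson with rate $\Lambda_i/K$ on the single-slot sub-interval (by Theorem on point-process convergence, applied to the user's own $M=K$ past slots, together with the fact that the probability of two exceedances by the same user in $1/K$ time is $o(1/K)$), and these random variables are independent across users, so their sum is Poisson with rate $\sum_i \Lambda_i/K = \Lambda_T/K$.

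Next, I would write
\[
\Pr(\text{unutilized slot}) = \Pr\left(\sum_{i=1}^K \mathbf{N}_i = 0\right) + \Pr\left(\sum_{i=1}^K \mathbf{N}_i \geq 2\right),
\]
which is just the complement of the ``utilized slot'' event $\{\sum_i \mathbf{N}_i = 1\}$. Substituting the Poisson mass function at $k=0$ gives the first term $\exp\{-\Lambda_T/K\}$, and the second term becomes the claimed series $\sum_{k=2}^{K}\tfrac{(\Lambda_T/K)^k}{k!}\exp\{-\Lambda_T/K\}$ (truncated at $K$ since no more than $K$ users can exceed the threshold simultaneously; for $k>K$ the probability is exactly zero, so the truncation is harmless and in fact the tail beyond $K$ is already absorbed into the $o(1/K)$ error of the Poisson approximation). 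Adding the two pieces yields the statement.

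The only genuine subtlety, which I would handle by pointing back to the earlier Poisson-process derivation, is justifying that the joint distribution of $(\mathbf{N}_1,\dots,\mathbf{N}_K)$ factors into independent Poissons on the short sub-interval of length $1/K$; this independence is inherited from the independence of the users' channels and from the fact that disjoint counts of a point process are independent in the Poisson limit. Everything else is a routine computation with the Poisson mass function, so I expect no real obstacle beyond carefully citing the already-established point-process approximation.
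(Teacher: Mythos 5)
Your proposal is correct and follows essentially the same route as the paper's proof: decompose the unutilized-slot event into the disjoint idle-slot event $\{\sum_i \mathbf{N}_i = 0\}$ and collision event $\{\sum_i \mathbf{N}_i \geq 2\}$, then evaluate both via the Poisson law for $\sum_i \mathbf{N}_i$ with rate $\Lambda_T/K$ established earlier from the per-user point processes and their independence. Your added remarks on the truncation at $K$ and on the independence of the counts are sound refinements of exactly the argument the paper gives, not a different method.
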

\begin{proof}
The first summand is the probability of an idle slot. For non-uniform users we have
\begin{eqnarray}\label{eqn: idle slote probability non iid users}
   \Pr(\textmd{ idle slot }) &=& \Pr\left(\sum_{j=1}^{K} \textbf{N}_j =0\right)\\
    &=& e^{-\frac{1}{K}\Lambda_T}.\nonumber\\
\end{eqnarray}
The second summand is the probability of collision. For this case, we have
\begin{eqnarray}\label{eqn: binomial to poisson approx for k users non iid}
    \Pr(\bigcup_{k=2}^{K} k \textmd{ users exceeds } u) &=& \sum_{k=2}^{K} \Pr\left(\sum_{j=1}^{K} \textbf{N}_j = k\right)\nonumber\\
    &=& \sum_{k=2}^{K} \frac{\left(\frac{1}{K}\Lambda_{T}\right)^{k}}{k!} e^{-\frac{1}{K}\Lambda_T} .\nonumber
\end{eqnarray}
Since
\begin{equation*}
\Pr\left(\textmd{ unutilized slot }\right) =
 \Pr\left(\textmd{ idle slot } \bigcup \textmd{ collision } \right)\nonumber
\end{equation*}
Claim~\ref{claim: unutilized slot probability non iid} follows.
\end{proof}

\begin{figure}
\centering
\includegraphics[scale=0.6]{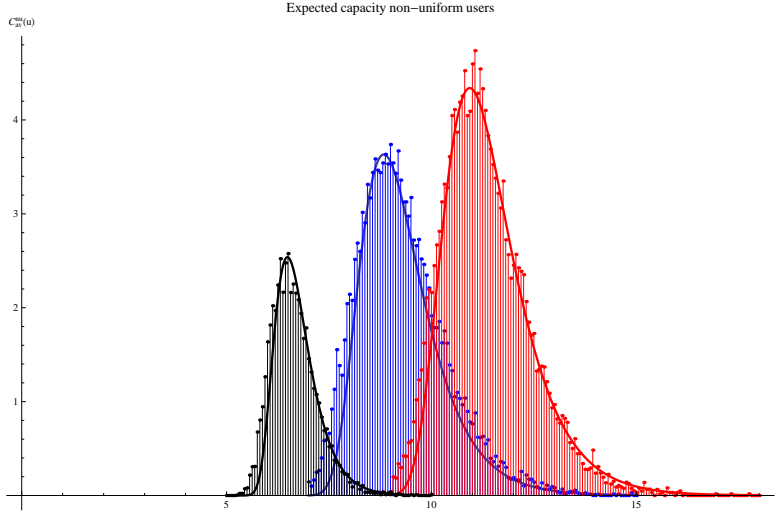}
\label{fig: non uni expected capacity simulated}
\caption{  Bars are simulation results, while the solid lines represent analytic results. The middle, blue lob, represents the expected capacity for $K=1000$ users in non-uniform environment, where the channel capacity of each user follows Gaussian distribution with $\sigma_i \sim U[0.03,3]$ and $\mu_i \sim U[\sqrt{2}-1, \sqrt{2}+1]$, by the analysis in Theorem~\ref{theorem: expected capacity non-uniform users}. Right side graph represents  the expected capacity when all users have the same channel capacity as the capacity of the strongest user. Left side graph represents the capacity when all users have the same channel capacity as the capacity of the mean user. }
\end{figure}

In Figure~\ref{fig: non uni expected capacity simulated} we present analytical results and simulated results of the expected capacity in a non-uniform environment for $K=1000$ users, and compare it to the expected capacity in a uniform environment.

\subsection{Weighted Users}
In this section, we derive the expected capacity when applying QoS to the users.
The QoS refers to communication systems that allow the transport of traffic with special requirements, e.g., media streaming, IP telephony, online games and more. In particular, a certain minimum level of bandwidth and a certain maximum latency is required to function.
 In our setting, the QoS is reflected in the exceedance probability applied to each user. This reflection allows simple analysis, which is similar to heterogeneous users analysis. Hence, given a probability vector $\vec{p} \in \mathbb{R}^{K \times 1}$, each user sets a threshold corresponding to his exceedance probability by using (\ref{eqn: estimated u_p Gaussian}) or by using (\ref{eqn: estimated u_p Gumbel}), such that his threshold arrival rate corresponds to the QoS applied to him.
Let $C_{av}^{QoS}\left(\vec{p}\right)$ denote the expected capacity in a non-uniform environment, when QoS applied to the users.
\begin{claim}\label{claim: GOS capacity}
The expected capacity with QoS in a non-uniform environment is
\begin{multline}\label{eqn: weighted threshold capacity throuput}
 C_{av}^{QoS}\left(\vec{p}\right) = \frac{1}{K}\Lambda_{T}^{(\vec{p})} e^{-\frac{1}{K}\Lambda_{T}^{(\vec{p})}} \sum_{i=1}^{K}\frac{\Lambda_{i}^{(p_i)}}{\Lambda_{T}^{(\vec{p})}}\left(\sigma_i\left[b_{1/p_i} - a_{1/p_i}\log\log\left(1-p_i\right) + a_K\right] + \mu_i + o(a_{1/p_i})\right)\nonumber
\end{multline}
where
\begin{eqnarray}
    \Lambda_{i}^{(p_i)} &=& \exp\left\{-\frac{b_K + b_{1/p_i}}{a_K}\right\}\left(-\log (1-p_i)\right)^{a_{1/p_i}},\label{eqn: lambda p i}\\
\Lambda_{T}^{(\vec{p})} &=& \sum_{i=1}^K \Lambda_{i}^{(p_i)}\label{eqn: tilde lambda p}
\end{eqnarray}
and $p_i$ is the exceedance probability of the $i_{th}$ user.
\end{claim}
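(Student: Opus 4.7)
The plan is to mirror the proof of Theorem~\ref{theorem: expected capacity non-uniform users}, replacing the single common threshold $u$ by a user-dependent threshold $u_{p_i}$ that encodes the QoS requirement of user $i$. The decomposition
\[
C_{av}^{QoS}(\vec{p}) \;=\; \Pr(\textmd{utilized slot})\cdot E[\mathcal{C}\mid \textmd{single exceedance}]
\]
remains structurally identical to (\ref{eqn: C av nu definition}); only the per-user Poisson rates, which now depend on $p_i$ rather than on a common $u$, must be recomputed.

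First, I would give user $i$ a private threshold $u_{p_i}$ by applying the Gumbel quantile formula (\ref{eqn: estimated u_p Gumbel}) to a Gaussian ancestor with mean $\mu_i$ and variance $\sigma_i^2$, using the shifted/scaled constants (\ref{eqn: a_n with mu and sigma})--(\ref{eqn: b_n with mu and sigma}). This yields
\[
u_{p_i} \;=\; \sigma_i\bigl(b_{1/p_i} - a_{1/p_i}\log(-\log(1-p_i))\bigr) + \mu_i + o(a_{1/p_i}),
\]
which is the expression inside the inner parenthesis of the claim once the $\sigma_i a_K$ excess term is added (I read the paper's $\log\log(1-p_i)$ as shorthand for $\log(-\log(1-p_i))$). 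Next, exactly as in the proof of Theorem~\ref{theorem: expected capacity non-uniform users}, I would build a point process for user $i$ from their last $K$ slots, invoke Theorem~\ref{theorem: point process}, and, on a $1/K$-length slot interval, read off a Poisson arrival rate of $(1/K)\Lambda_i^{(p_i)}$, where $\Lambda_i^{(p_i)}$ comes from substituting $u_{p_i}$ into (\ref{eqn: lambda i}) (the $\xi\to 0$ limit of $(1+\xi x)^{-1/\xi}$ with the Gaussian normalization). Independence across users then gives $\Lambda_T^{(\vec{p})} = \sum_i \Lambda_i^{(p_i)}$, and the total count is Poisson with parameter $(1/K)\Lambda_T^{(\vec{p})}$, matching Claim~\ref{claim: unutilized slot probability non iid} verbatim with $\Lambda_T$ replaced by $\Lambda_T^{(\vec{p})}$.

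Conditioning on exactly one exceedance, independence of the per-user Poissons yields, as in (\ref{eqn: i user exceedance probability given the exceedance occurred}),
\[
\Pr\!\left(\textbf{N}_i=1 \,\Big|\, \sum_j \textbf{N}_j = 1\right) \;=\; \frac{\Lambda_i^{(p_i)}}{\Lambda_T^{(\vec{p})}}.
\]
For the excess of user $i$ above $u_{p_i}$, the generalized Pareto tail (\ref{eqn: Pareto def}), which specializes to the exponential (\ref{eqn: Exponential CDF}) in the Gaussian case, yields by Corollary~\ref{coro: expected capacity above est by GEV thr} an expected excess of $\sigma_i a_K + o(a_K)$. Thus the contribution of user $i$, conditional on being the unique exceedance, is $u_{p_i} + \sigma_i a_K + o(a_K)$; averaging with weights $\Lambda_i^{(p_i)}/\Lambda_T^{(\vec{p})}$ and multiplying by $(1/K)\Lambda_T^{(\vec{p})}\exp(-(1/K)\Lambda_T^{(\vec{p})})$ delivers the displayed formula.

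The only genuine obstacle is bookkeeping rather than a new probabilistic idea: two distinct sets of normalizing constants now appear simultaneously --- $(a_K,b_K)$ from the $K$-sample point process used for the arrival rates and the Pareto-tail excess, and $(a_{1/p_i},b_{1/p_i})$ from the per-user quantile used to set the threshold. Care is needed to make sure that the $o(a_{1/p_i})$ error from the threshold approximation, the $o(a_K)$ error from the exponential tail, and the $o(1/K)$ Poissonization error all combine into the single additive $o(a_{1/p_i})$ reported in the claim. Once this accounting is done, every remaining step is a transcription of the uniform-threshold argument in Theorem~\ref{theorem: expected capacity non-uniform users}.
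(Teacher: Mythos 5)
Your proposal is correct and follows essentially the same route as the paper's own proof: a per-user threshold $u_{p_i}$ from the Gumbel quantile formula (\ref{eqn: estimated u_p Gumbel}), per-user Poisson rates obtained by substituting $u_{p_i}$ into (\ref{eqn: lambda i}), summation by independence to get $\Lambda_T^{(\vec{p})}$, conditioning on a single exceedance to get the weights $\Lambda_i^{(p_i)}/\Lambda_T^{(\vec{p})}$, and the exponential-tail excess $\sigma_i a_K$ per user. Your reading of $\log\log(1-p_i)$ as $\log(-\log(1-p_i))$ and your attention to the two coexisting sets of normalizing constants match (and indeed tidy up) the paper's own, somewhat looser, bookkeeping.
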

Note that Claim~\ref{claim: GOS capacity} can be applied whether the users are uniformly distributed or not. That is, the QoS setting is applicable both in the previous, uniform case and in the later non-homogeneous case.
\begin{proof}
Since
$$C_{av}^{QoS} = \Pr\left(\textmd{utilized slot}\right)E\left\{C|C_i>u_i \forall_{i=1,2,...,K} \right\}$$
We analyze the following.
In (\ref{eqn: lambda i}), we expressed the threshold arrival rate as a function of the threshold $u$. Now, based on (\ref{eqn: estimated u_p Gumbel}), we wish to set a unique threshold $u_{p_i}$ for each user, such that the $i_{th}$ user will exceed his threshold with probability $p_i$. Hence,
\begin{eqnarray*}
    \Lambda_{i}^{(p_i)} &=& \exp\left\{-\frac{u_{ p_i} - \sigma_i b_K - \mu_i}{\sigma_i a_K}\right\}\nonumber\\
    &=& \exp \left\{\frac{-(b_{1/p_{i}} + b_K) + a_{1/p_i}\left(\log \log (1- p_i)\right) }{a_K} \right\}\nonumber\\
    &=& \exp\left\{-\frac{b_K + b_{1/p_i}}{a_K}\right\}\left(-\log (1-p_i)\right)^{a_{1/p_i}}.
\end{eqnarray*}
 Since the users are independent, the total threshold arrival rate is the sum of rates for all users. Thus,
$$\Lambda_{T}^{(\vec{p})} = \sum_{i=1}^K \exp\left\{-\frac{(b_K + b{1/p_i})}{ a_K}\right\}\left(-\log(1-p_i)^{a_{1/p_i}}\right).$$
As for the expected capacity, similarly to the previous section, each user that exceeds the threshold contributes a different capacity, corresponding to his threshold. Hence, by averaging the capacity that each user donates, we obtain,
\begin{eqnarray*}
E\left\{C|C_i>u_i \forall_{i=1,2,...,K}\right\} &=& \sum_{i=1}^{K}\frac{\Lambda_{i}^{(p_i)}}{\Lambda_{T}^{(\vec{p})}}\left( u_{p_i} + \sigma_i a_K + o(a_K)\right)\\
&=& \sum_{i=1}^{K}\frac{\Lambda_{i}^{(p_i)}}{\Lambda_{T}^{(\vec{p})}} \left(\sigma_i\left[b_{1/p_i} - a_{1/p_i}\log\log\left(1-p_i\right) + a_K\right] + \mu_i + o(a_{1/p_i})\right)
\end{eqnarray*}
Finally, the probability that a slot is utilized, i.e., a single user exceeds the threshold in interval length of $1/K$, is
$$ \Pr\left(\textmd{utilized slot}\right) = \frac{1}{K}\Lambda_{T}^{(\vec{p})} e^{-\frac{1}{K}\Lambda_{T}^{(\vec{p})}}.$$
Hence, Claim~\ref{claim: GOS capacity} follows.
\end{proof}

\subsection{Equal Time Sharing of Non-Uniform Users}
Equal-time-sharing is a scheduling strategy for which the system resources are equally distributed among users or groups. Whereas implementing equal-time-sharing in a homogeneous environment is to apply a uniform random or round-robin scheduling strategy to users, implementing equal-time-sharing in a non-uniform environment  is to set  for each user a threshold that is relative to his own sample maxima probability, i,e. set
$p_i = \frac{1}{K} , \forall i=1,2,...,K $.

Let $C_{av}^{es}$ denote the expected capacity in a non-uniform environment, when there is an equal exceedance probability to all users.
\begin{corollary}\label{coro: expected capacity proportional fairness}
The expected capacity with equal time sharing follows
\begin{equation}\label{eqn: expected capacity proportional fairness}
C_{av}^{es} = \frac{1}{K}\Lambda_{T}^{(1)} e^{-\frac{1}{K}\Lambda_{T}^{(1)}}
\sum_{i=1}^{K}\frac{1}{K} \left(\sigma_i\left[b_K + a_K\left(1 - \log\log\left(\frac{K-1}{K}\right) \right)\right] + \mu_i \right)
+ o(a_K)\nonumber.
\end{equation}
where
\begin{equation}\label{eqn: lambda proportianl fairness}
    \Lambda_{T}^{(1)} =  e^{-2\frac{b_K}{a_K}} K \left(-\log\left(\frac{K-1}{K}\right)\right)^{a_K}
\end{equation}
\end{corollary}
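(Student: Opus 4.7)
The corollary is a direct specialization of Claim~\ref{claim: GOS capacity} to the QoS vector $\vec{p} = (1/K,\ldots,1/K)$. The plan is first to justify why equal time sharing in a heterogeneous setting corresponds to this particular choice of $\vec{p}$, and then to perform the substitution and collect terms.

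First, I would argue the modeling step. By equation (\ref{eqn: i user exceedance probability given the exceedance occurred}), the conditional probability that user $i$ wins a utilized slot is, in the Poisson limit, $\Lambda_i^{(p_i)}/\Lambda_T^{(\vec{p})}$. Equal time sharing demands that this ratio equal $1/K$ for every $i$, and the natural way to enforce this in a heterogeneous environment is to let each user independently set his own threshold so that his marginal exceedance probability is $p_i = 1/K$. This is the heterogeneous analogue of round-robin and is the definition used in the statement of the corollary.

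Next, I would specialize the threshold arrival rate. With $p_i = 1/K$, we have $1/p_i = K$, hence $a_{1/p_i} = a_K$ and $b_{1/p_i} = b_K$, so from (\ref{eqn: lambda p i})
\begin{equation*}
\Lambda_i^{(1/K)} = \exp\!\left\{-\frac{2b_K}{a_K}\right\}\!\left(-\log\frac{K-1}{K}\right)^{a_K},
\end{equation*}
which is independent of $i$. Summing over $K$ users gives $\Lambda_T^{(1)}$ exactly as in (\ref{eqn: lambda proportianl fairness}), and collapses every weight $\Lambda_i^{(1/K)}/\Lambda_T^{(1)}$ to $1/K$, confirming the equal-share interpretation.

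Finally, I would substitute into the capacity expression from Claim~\ref{claim: GOS capacity}. Using $b_{1/p_i} = b_K$, $a_{1/p_i} = a_K$, and $1 - p_i = (K-1)/K$, the bracketed term $b_{1/p_i} - a_{1/p_i}\log\log(1-p_i) + a_K$ simplifies to $b_K + a_K\bigl(1 - \log\log((K-1)/K)\bigr)$. Averaging $\sigma_i\bigl[\,\cdot\,\bigr] + \mu_i$ with the uniform weight $1/K$ and multiplying by the prefactor $\tfrac{1}{K}\Lambda_T^{(1)}e^{-\Lambda_T^{(1)}/K}$ produces the stated formula; the per-user errors $o(a_{1/p_i}) = o(a_K)$ combine into a single $o(a_K)$ since the weights sum to one.

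The computation is routine algebra once the modeling step is granted, so the main (indeed only) subtlety is the interpretive one in the first paragraph: verifying that $p_i = 1/K$ is genuinely the correct translation of equal time sharing. The identity $\Lambda_i^{(1/K)}/\Lambda_T^{(1)} = 1/K$ derived in the second paragraph is precisely what legitimizes this translation, so no technical obstacle remains.
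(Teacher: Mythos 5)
Your proof is correct and follows exactly the paper's route: the paper disposes of this corollary in two sentences by declaring equal time sharing a special case of QoS and setting $p_i = 1/K$ in (\ref{eqn: lambda p i}), which is precisely your specialization $a_{1/p_i} = a_K$, $b_{1/p_i} = b_K$, $\Lambda_i^{(1/K)}/\Lambda_T^{(1)} = 1/K$ applied to Claim~\ref{claim: GOS capacity}. Your write-up merely makes explicit the algebra and the justification for collecting the error terms, which the paper leaves implicit.
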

Equal time sharing is a special case of QoS. By setting $p_i = 1/K$ in (\ref{eqn: lambda p i}) to all users, Corollary~\ref{coro: expected capacity proportional fairness} follows.

\section{Capture effect}\label{sec. capture}
Similar to the human auditory system, where the strongest speaker is filtered out of a crowed, the \emph{capture effect} is a phenomenon associated with signal reception in which in case of a collision, the stronger of two signals will be received correctly at the receiver. In this paper, the capture effect directly implies less harmful collisions, hence a higher capacity. That is, this phenomenon overcomes the situation where collisions corrupt the packets involved, and it has been shown that capture effect increase throughput and decrease delay in variety of wireless networks including radio broadcasting, such as Aloha networks, 802.11 networks, Bluetooth radios and cellular systems \cite{whitehouse2005exploiting,hadzi2002capture}.
In our settings, the capture effect enables us to set a lower threshold, such that two users will exceed the threshold on average, which significantly reduced the probability of idle slot.

 Whereas using EVT to examine the capture effect capacity gain is rather complicated, the point process technique enables us to obtain it easily.
In this section we characterize the capture effect capacity gain, when the receiver can successfully receive the transmission of the stronger user if no collision, or a collision of two users at most occurs.

\begin{proposition}\label{prop: capture effect}
The expected capacity of non-uniform users subject to capture effect follows
\begin{eqnarray}\label{eqn: capacity with capture effect}
C_{av}^{nuc}(u) &=& \frac{1}{K}\Lambda_T e^{-\frac{1}{K}\Lambda_T}\left( \sum_{i=1}^K \frac{\Lambda_i}{\Lambda_T}(u +\sigma_i a_K )\right) \\
&& + \frac{1}{2}\left(\frac{1}{K}\Lambda_T\right)^{2} e^{-\frac{1}{K}\Lambda_T} \left( \sum_{i=1}^K \sum_{j = i+1}^K  2\frac{\Lambda_i\Lambda_j}{\Lambda_T^2}  \left(u +\left(\sigma_i+\sigma_j - \frac{\sigma_i \sigma_j}{\sigma_i+\sigma_j}\right) a_K \right)\right)\nonumber\\
&& + o(a_K).\nonumber
\end{eqnarray}
where $\Lambda_i$ and $\Lambda_T$ are given in (\ref{eqn: lambda i}) and  (\ref{eqn: tilde lambda}), respectively, and $a_K$ and $b_K$ are given in (\ref{eqn: a_n normalized}) and (\ref{eqn: b_n normalized}), respectively.
\end{proposition}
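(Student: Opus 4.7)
The plan is to decompose $C_{av}^{nuc}(u)$ as
\[
C_{av}^{nuc}(u) = \Pr(\textmd{1 exceedance})\,E[C\mid\textmd{1 exceedance}] + \Pr(\textmd{2 exceedances})\,E[\max(C_i,C_j)\mid\textmd{users } i,j \textmd{ exceed}],
\]
because with the capture effect a slot with exactly two exceedances still delivers the capacity of the stronger user, while slots with three or more exceedances remain lost. The first summand is the right-hand side of Theorem~\ref{theorem: expected capacity non-uniform users}, so all new work lies in the second.

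By the Poisson superposition already used in Claim~\ref{claim: unutilized slot probability non iid}, $\Pr(\textmd{2 exceedances}) = \tfrac{1}{2}(\Lambda_T/K)^2 e^{-\Lambda_T/K}$. Repeating the conditioning argument of Claim~\ref{claim: expected capacity non iid} for two users rather than one yields
\[
\Pr(\textbf{N}_i = \textbf{N}_j = 1 \mid \textmd{2 exceedances}) = \frac{2\Lambda_i\Lambda_j}{\Lambda_T^2}, \qquad i<j,
\]
which reproduces the $\sum_{i}\sum_{j>i} 2\Lambda_i\Lambda_j/\Lambda_T^2$ weighting in the statement.

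The core new calculation is $E[\max(C_i,C_j)\mid C_i>u,\,C_j>u]$. By (\ref{eqn: Exponential CDF}) applied to each user with the user-specific normalizer $\sigma_i a_K$ from (\ref{eqn: a_n with mu and sigma}), the excesses $X_i := C_i-u$ and $X_j := C_j-u$ are, in the point-process limit, independent exponentials with rates $1/(\sigma_i a_K)$ and $1/(\sigma_j a_K)$, up to $o(a_K)$. The identity $\max(X_i,X_j) = X_i + X_j - \min(X_i,X_j)$ together with the standard fact that the minimum of two independent exponentials is exponential with the summed rate gives
\[
E[\max(X_i,X_j)] = \sigma_i a_K + \sigma_j a_K - \frac{\sigma_i\sigma_j}{\sigma_i+\sigma_j}\,a_K,
\]
and hence $E[\max(C_i,C_j)\mid\ldots] = u + a_K\bigl(\sigma_i+\sigma_j - \sigma_i\sigma_j/(\sigma_i+\sigma_j)\bigr) + o(a_K)$. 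Plugging the two-exceedance probability, the pair weights, and this conditional expectation into the decomposition, together with the single-exceedance term from Theorem~\ref{theorem: expected capacity non-uniform users}, yields (\ref{eqn: capacity with capture effect}).

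The main obstacle is justifying that, after conditioning on exactly one arrival from each of two distinct users, the two excesses are still independent exponentials with their user-specific rates. Conditioning on a count need not leave the marks untouched for a general point process, but it does for the superposition of independent \emph{marked} Poisson processes: the marked-point interpretation behind Theorem~\ref{theorem: point process} makes each exceedance an independent draw from its own user's tail law, so counts and marks decouple. Controlling the $o(a_K)$ error and absorbing the negligible probability that a single user exceeds the threshold twice inside a slot of length $1/K$ is routine and is inherited from the arguments behind Theorem~\ref{theorem: expected capacity non-uniform users}.
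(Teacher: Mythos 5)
Your proposal is correct and follows essentially the same route as the paper's proof: the same decomposition into a single-exceedance term (Theorem~\ref{theorem: expected capacity non-uniform users}) plus a two-exceedance term, the same Poisson computation giving the pair weights $2\Lambda_i\Lambda_j/\Lambda_T^2$, and the same evaluation of the expected maximum of two independent exponential excesses. The only cosmetic difference is that you compute $E[\max(X_i,X_j)]$ via the identity $\max = X_i + X_j - \min(X_i,X_j)$, while the paper multiplies the two tail CDFs and integrates; both yield $u + \bigl(\sigma_i+\sigma_j - \sigma_i\sigma_j/(\sigma_i+\sigma_j)\bigr)a_K$.
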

\begin{proof}
The expected capacity obtained when a single user exceeds the threshold was given in Theorem~\ref{theorem: expected capacity non-uniform users}.
 To obtain the expected capacity when two users exceed threshold in a $1/K$ slot interval we define the following events:
 \begin{eqnarray*}
 A_t &=&  \textmd{exactly two users exceeded}.\\
 A_i &=& \textmd{user i exceeded}.\\
 A_j &=& \textmd{user j exceeded}.
\label{eq: events capture effect}
\end{eqnarray*}


The probability that only users $i$ and $j$ exceed threshold in a $1/K$ time interval follows
\begin{eqnarray*}
\Pr(A_i,A_j,A_t)&=& \frac{1}{K}\Lambda_i e^{-\frac{1}{K}\Lambda_i} \frac{1}{K}\Lambda_j e^{-\frac{1}{K}\Lambda_j} \prod_{l \neq j,i}e^{-\frac{1}{K}\Lambda_l}\\
&=&\frac{1}{K^2}\Lambda_i \Lambda_j e^{-\frac{1}{K}\Lambda_T}\\
&=& \frac{\left(\frac{1}{K}\Lambda_T\right)^2}{2}e^{-\frac{1}{K}\Lambda_T} 2\frac{\Lambda_i \Lambda_j}{\Lambda_T^2}
\end{eqnarray*}

When two users' capacities are above the threshold, the receiver captures only the stronger user transmission, hence, only the stronger user capacity counts in practice.
The stronger user capacity distribution equals to the distribution of the maximum between two random capacities, which both have exponential tail distribution, that is, the maximum of two exponential random variables.
\begin{eqnarray*}
F_{\max\left(C_i,C_j\right)|C_i,C_j>u}(x) &=& \left(1 - e^{-\frac{x}{\sigma_i a_K}}\right)\left(1 - e^{-\frac{x}{\sigma_j a_K}}\right)\\
 &=& 1 - (e^{-\frac{x}{\sigma_i a_K}} + e^{-\frac{x}{\sigma_j a_K}}) + e^{-\frac{x}{a_K}\frac{\sigma_i \sigma_j}{\sigma_i + \sigma_j}}.
\end{eqnarray*}
Thus, when users $i$ and $j$ exceed the threshold, the stronger user will contribute
\begin{eqnarray*}
u + \left(\sigma_i+\sigma_j - \frac{\sigma_i \sigma_j}{\sigma_i+\sigma_j}\right) a_K.
\end{eqnarray*}
to the expected capacity. Hence, by averaging the stronger user contribution among all $i$ and $j$, Proposition~\ref{prop: capture effect} follows.
\end{proof}

\begin{figure}
\centering
  \includegraphics[scale=1]{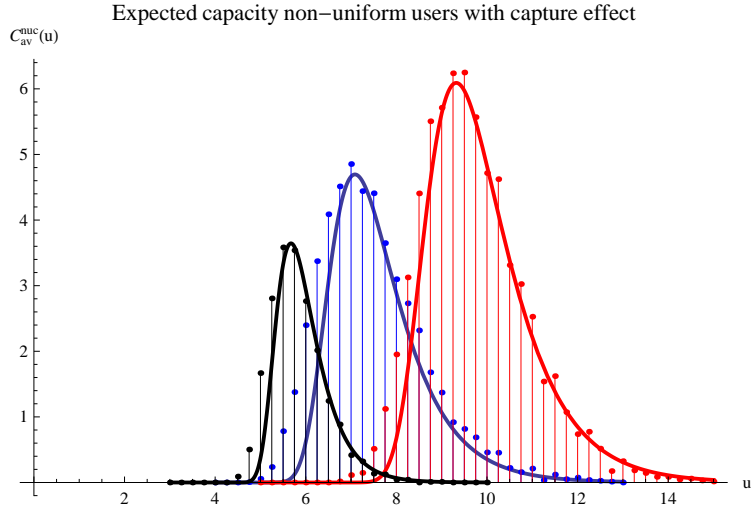}
  \caption{Expected capacity with capture effect for 250 users. On the left we present the capacity of uniform users, as if they see the same channel of the mean user, subject to capture effect. In the middle, we present the capacity of non-uniform users, subject to capture effect.  On the left, we present the capacity of uniform users, as if they see the same channel of the mean user, subject to capture effect.}
 \label{fig: capture capacity}
\end{figure}

\begin{figure}
\centering
  \includegraphics[scale=1]{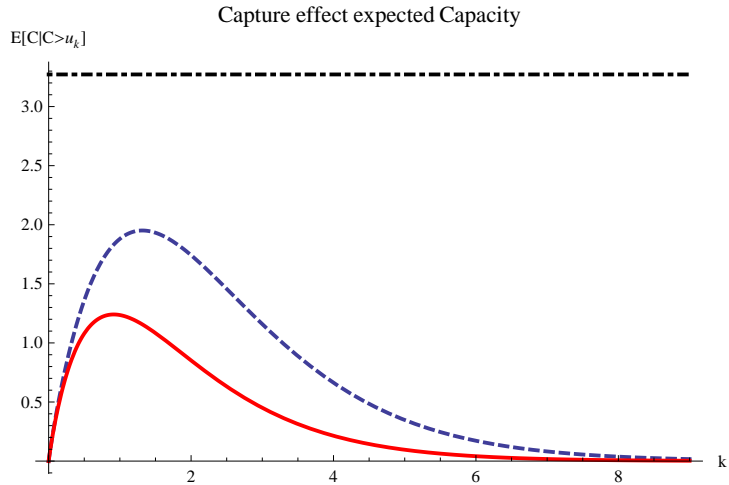}
  \caption{Capture effect capacity gain for 1000 i.i.d.\ users. The solid line represent the expected capacity when setting a threshold such that $k$ user exceeds the threshold on average, as given in Figure~\ref{fig: thr alg unenhanced}. The dashed line represent the expected capacity when $k$ users, that are subject to the capture effect, exceed the threshold on average. The upper dot-dashed line represent the expected capacity of the optimal multi-user diversity centralized scheme.}\label{fig: capture gain}
\end{figure}

In Figure~\ref{fig: capture capacity} we present the expected capacity for a uniform and non-uniform users, subject to the capture effect.
In Figure~\ref{fig: capture gain} we present the capacity gain introduced by the capture effect, when setting a threshold such that $k$ user exceeds the threshold  on average, and compare it to the expected capacity with no capture effect. Furthermore, we see that a higher capacity is achieved when setting a lower threshold, such that $k>1$ users will exceed it on average.

One should notice that the capture effect violates any QoS applied to users. When users subject to a QoS, each user must exceed a unique threshold corresponding to his QoS. Hence, when a collision occur, a strong user with a higher threshold, usually corresponding to a lower QoS,  will utilize the threshold, violating the QoS guaranteed to users with lower threshold that usually corresponds to higher QoS.

\section{Collision Avoidance}\label{sec. enhance}
In this section, we show an algorithm which asymptotically achieves the optimal capacity.
In \cite{qin2003exploiting, qin2004opportunistic}, the authors give a splitting algorithm that can cope with collisions when a collision detection mechanism is available, by dividing each slot into mini-slots, such that a collision can be resolved in the next mini-slot. In many cases, while collision resolution is not possible, the users are still capable of sensing the carrier, and understanding if a mini-slot is being used or not. Thus, we wish to develop a collision avoidance algorithm which is based only on carrier sensing. In other words, in this case we assume that the users are only able to detect if the channel is being used in mini-slots resolution. If a collision does occur within a mini-slot, we assume the whole slot is lost.
First, we wish to minimize the idle slot probability, that without any enhancement will occur $1/e$ of the time. Next, we suggest an algorithm that copes with the resulting collision probability.

From (\ref{eqn: idle time slot analysis}), it is easy to see that the idle slot probability goes to zero
when setting  $k=\log K$ as follows,
\begin{equation*}
    \Pr(\textmd{ idle time slot }) \rightarrow e^{-\log K} = 1/K \rightarrow 0.
\end{equation*}

However, when setting a threshold such that $\log K$ users will exceed on average, we have to deal with $\log K$ users on average, that find themselves adequate for utilizing next time slot.\\
To overcome this problem, we suggest to rate users that exceeded the threshold by the distance they reached from the threshold.
The set of values above the threshold is divided to $l$ bins: $[u_p,u_p+t_1),[u_p+t_1,u_p+t_2),\ldots,[u_p+t_{l-1},\infty)$, numbered $1,\ldots,l$, respectively. A user which passed the threshold checks in which bin its expected capacity lies. If the bin index is $i$, it waits $i$ mini-slots and checks the channel. If the channel is clean, it transmits its data.
In order to achieve uniform distribution over the bins, we set the bins boundaries by the exponential limit distribution that we found in (\ref{eqn: Exponential CDF}), that is, the $i_{th}$ bin boundaries follows
\begin{equation}\label{eqn: bin boundraies}
  t_i = (2\log K)^{-1/2}\log(i/l),
   \qquad \forall i = 1,2,...,l.\nonumber
\end{equation}
as we can see in Figure~\ref{fig: usersIntoBins}.

From now on, we assume that the probability for a user who passed the threshold to fall in a specific bin is $\frac{1}{l}$ for all bins.
\begin{figure}
\centering
  \includegraphics[scale=1.15]{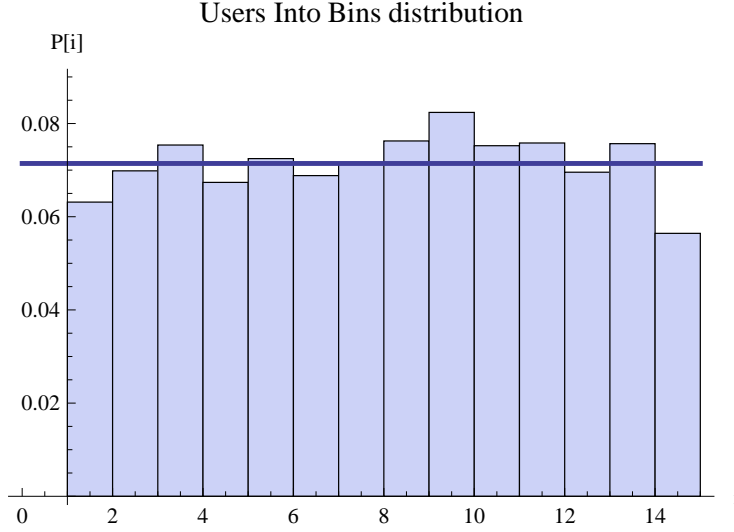}
  \caption{Distribution of users inside our bin when bin boundaries was set by (\ref{eqn: bin boundraies})  }\label{fig: usersIntoBins}
\end{figure}
\begin{claim}\label{claim: unutilized slot probability enhanced alg}
In the suggested enhanced scheme, the probability of utilized slot is
\begin{equation}\label{eqn: Collision in max index bin probability random k short}
    \Pr(\textmd{utilized slot}) =
      \sum_{j=1}^{l}\sum_{m=1}^{K}\binom{K}{m} \left(\frac{k}{K}\right)^m\left(\frac{K-k}{K}\right)^{K-m}
      m\left(\frac{1}{l}\right)\left(\frac{l-j}{l}\right)^{m-1}\nonumber
 \end{equation}
where $m$ is a realization of the number of uses who passed the threshold.
\end{claim}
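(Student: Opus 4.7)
The plan is to decompose the utilized-slot event according to two quantities: (i) the number $m$ of users whose capacity exceeds the threshold $u_p$, and (ii) the smallest-indexed occupied bin $j$ among those users, which under the carrier-sensing protocol is exactly the bin whose user wins the mini-slot contention and transmits. Both factors in the stated product then admit clean probabilistic interpretations.

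I would first note that each of the $K$ users independently exceeds the threshold with probability $k/K$, so the exceedance count is $\mathrm{Binomial}(K,k/K)$, producing the factor $\binom{K}{m}(k/K)^m((K-k)/K)^{K-m}$. Conditional on having passed the threshold, the exponential tail from (\ref{eqn: Exponential CDF}), combined with the bin endpoints chosen via (\ref{eqn: bin boundraies}) as equally-spaced quantiles of that exponential, ensures that each passing user's bin is uniform on $\{1,\ldots,l\}$ and independent across users, as illustrated in Figure~\ref{fig: usersIntoBins}. For fixed $m$ and $j$, the conditional probability that exactly one of the $m$ passing users lies in bin $j$ while no passing user lies in any bin of strictly smaller index is then $m\cdot(1/l)\cdot((l-j)/l)^{m-1}$: there are $m$ choices for the lone occupant of bin $j$, probability $1/l$ that this chosen user actually lands in bin $j$, and probability $((l-j)/l)^{m-1}$ that each of the remaining $m-1$ users independently lands in one of the $l-j$ higher-indexed bins and therefore senses the carrier busy once the bin-$j$ user transmits at mini-slot $j$.

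Summing these pairwise disjoint events over $j=1,\ldots,l$ (the winning bin is unique on a utilized slot) and $m=1,\ldots,K$, and applying the law of total probability against the Binomial mass, yields the claimed expression. The main obstacle is conceptual rather than algebraic: one must justify that the events indexed by $(j,m)$ really do partition the utilized-slot event, which rests on two protocol properties --- that carrier sensing within a mini-slot is reliable enough that users in higher-indexed bins abort upon detecting the carrier, and that the winning user is always the lowest-indexed occupant. Once those are in place, together with the uniform-bin approximation inherited from the exponential tail analysis of the preceding subsection, the remainder reduces to a routine Binomial-with-independence count, with the edge case $j=l$ giving a nonzero contribution only for $m=1$ through the convention $0^0=1$.
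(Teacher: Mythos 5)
Your proposal is correct and follows essentially the same route as the paper: condition on the Binomial$(K,k/K)$ number $m$ of threshold exceedances, use the uniform-and-independent bin assignment to compute the probability $m\left(\frac{1}{l}\right)\left(\frac{l-j}{l}\right)^{m-1}$ that exactly one user occupies the lowest occupied bin $j$ with all others in higher-indexed bins, and sum over $j$ and $m$ by the law of total probability. Your additional justifications (why the bins are uniform via the exponential tail, why the $(j,m)$ events partition the utilized-slot event, and the $j=l$ edge case) are sound refinements of details the paper leaves implicit.
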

\begin{proof}
Let $J$ be the index of the occupied bin with the lowest index, in which the strongest user lies. Thus, the probability that a single user occupies bin $J$, for a fixed $k$ users who exceeded the threshold is 

\begin{eqnarray}\label{eqn: Collusion in max index bin probability fixed k}
    \Pr\left( \textmd{ utilized slot } \right)   &=&  \sum_{j=1}^{l} k \left(\frac{1}{ l}\right) \left(\frac{l - j}{l}\right)^{k-1}\nonumber
\end{eqnarray}
We notice that when $k$ is not fixed, it should be represented as a random variable which follows the binomial distribution with parameters $n=K$ and $p=k/K$, as follows from (\ref{eqn: collision probability}). Hence, by using complete probability formula, we have
\begin{eqnarray}\label{eqn: Collision in max index bin probability random k}
    \Pr(\textmd{utilized slot}) &=&  \Pr(E_1)\\
     &=&  \sum_{j=1}^{l}\sum_{m=1}^{K}\binom{K}{m} \left(\frac{k}{K}\right)^m\left(\frac{K-k}{K}\right)^{K-m}
      m\left(\frac{1}{l}\right)\left(\frac{l-j}{l}\right)^{m-1}.\nonumber
\end{eqnarray}
\end{proof}
This suggests that we can achieve small collision probability as we like, by increasing the number of bins, as the following claim asserts.
\begin{claim}\label{claim: number of bins needed for low unutilized probability}
 In the enhanced algorithm the probability of unutilized slot converges to zero as $l$ increases.
\end{claim}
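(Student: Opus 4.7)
The plan is to split the unutilized-slot event into the idle-slot event (no user exceeds the threshold) and the collision event (two or more users land in the same lowest-index occupied bin), and then use the formula of Claim~\ref{claim: unutilized slot probability enhanced alg} to show that the second contribution vanishes as $l\to\infty$, while the first is controlled by the choice $k=\log K$ from the beginning of the section.

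First I would isolate the inner summand in the expression for $\Pr(\textmd{utilized slot})$. For fixed $m\ge 1$,
$$\sum_{j=1}^{l} m\left(\frac{1}{l}\right)\left(\frac{l-j}{l}\right)^{m-1} \;=\; m\cdot\frac{1}{l}\sum_{i=0}^{l-1}\left(\frac{i}{l}\right)^{m-1},$$
which is a Riemann sum for $m\int_0^1 x^{m-1}\,dx = 1$. Hence each inner sum tends to $1$ as $l\to\infty$. Because the outer sum over $m$ has only finitely many terms (at most $K$), interchanging limit and summation is immediate, and
$$\lim_{l\to\infty}\Pr(\textmd{utilized slot}) \;=\; \sum_{m=1}^{K}\binom{K}{m}\left(\frac{k}{K}\right)^{m}\left(1-\frac{k}{K}\right)^{K-m} \;=\; 1-\left(1-\frac{k}{K}\right)^{K}.$$

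Consequently, as $l\to\infty$, $\Pr(\textmd{unutilized slot})$ converges to the residual $(1-k/K)^{K}$, which is exactly the idle-slot probability in (\ref{eqn: idle time slot analysis}). Substituting the recommended choice $k=\log K$ yields
$$\left(1-\frac{\log K}{K}\right)^{K} \;\longrightarrow\; e^{-\log K} \;=\; \frac{1}{K},$$
which tends to zero as $K\to\infty$. Thus, by first taking $l$ large to drive the collision contribution to zero and then invoking $k=\log K$ to eliminate the idle contribution, the unutilized-slot probability vanishes.

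The main subtlety is bookkeeping: the statement reads ``as $l$ increases,'' but the $l$-limit of $\Pr(\textmd{unutilized})$ is the $l$-independent idle probability $(1-k/K)^K$, so the claim only holds jointly with the scaling $k=\log K$ (and $K\to\infty$) inherited from the setup of the collision-avoidance section. I would therefore present the argument by writing $\Pr(\textmd{unutilized}) = \Pr(\textmd{idle}) + \Pr(\textmd{collision in lowest bin})$, showing that the collision term alone vanishes in $l$ via the Riemann-sum computation above, and then closing the proof with the explicit idle-slot estimate under $k=\log K$.
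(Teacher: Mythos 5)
Your proof is correct, but it takes a genuinely different route from the paper's. The paper conditions on a fixed number $k$ of users above the threshold and runs a birthday-problem computation: the probability that all $k$ users land in distinct bins is $\prod_{j=1}^{k-1}(1-j/l)$, which it approximates by $e^{-k(k-1)/2l}$, concluding that the collision probability $1-e^{-k(k-1)/2l}$ vanishes as $l\to\infty$; the idle-slot contribution is never addressed in that proof and is implicitly disposed of by the choice $k=\log K$ made earlier in the section. You instead start from the exact expression in Claim~\ref{claim: unutilized slot probability enhanced alg}, evaluate the inner sum as a Riemann sum tending to $1$ for each fixed $m$, and identify the $l$-limit of the unutilized probability as exactly the idle probability $(1-k/K)^K$ of (\ref{eqn: idle time slot analysis}), which you then eliminate with $k=\log K$ and $K\to\infty$. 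Your version buys three things: it treats the random number of exceeders exactly rather than conditioning on a deterministic $k$; it makes explicit the point (which the paper glosses over) that the claim as literally stated fails for fixed $k,K$ — the $l$-limit is the strictly positive idle probability, so the joint scaling is indispensable; and it sidesteps a direction-of-inequality slip in the paper, which upper-bounds the no-collision probability by $e^{-k(k-1)/2l}$ and therefore only lower-bounds the collision probability (the bound needed for the stated conclusion is the reverse one, e.g.\ $\prod_{j=1}^{k-1}(1-j/l)\ge 1-k(k-1)/(2l)$, giving collision probability at most $k(k-1)/(2l)$). What the paper's computation buys in exchange is quantitative guidance: collisions are negligible once $l\gg k^2$, which is precisely what motivates the $(\lceil\log K\rceil)^2$ mini-slots used in its simulations; your Riemann-sum limit, being purely asymptotic in $l$, does not by itself yield that rate.
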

\begin{proof}
If there are $k$ users above threshold and $l$ bins then the probability that all $k$ fall into different bins is
$$\left(1 - \frac{1}{l}\right)\cdot\left(1 - \frac{2}{l}\right)\cdot ... \cdot \left(1 - \frac{k-1}{l}\right) = \prod_{j=1}^{k-1} \left(1 - \frac{j}{l}\right)$$
Using that $1 - k/l \leq e^{-k/l}$ is tight bound when $k$ is small compared to $l$, we have
\begin{eqnarray*}\label{eqn: balls into bins}
\prod_{j=1}^{k-1} \left(1 - \frac{j}{l}\right) &\leq& \prod_{j=1}^{k-1} e^{-j/l} \\
&=& \exp \left\{-\sum_{j=1}^{k-1} \frac{j}{l}\right\}\\
&=& e^{-k(k-1)/2l}
\end{eqnarray*}
Hence, the probability of collision in any bin is $1- e^{-k(k-1)/2l}$, which is going to zero as $l$ increases.
Hence, Claim~\ref{claim: number of bins needed for low unutilized probability} follows.
\end{proof}

\subsection{Analyzing the Delay}
Regardless of collisions that may occur, we analyze the expected time that took the maximal user decide that he is the most adequate to utilize a slot, which is equivalent to the  expected index of the maximal occupied bin, out of $l$ bins.
In order to obtain this, we order the bins in descending order, such that bin $1$ corresponds to the highest capacities.
Since we choose  $k\ll K$,  on average only a small group of users will exceed the threshold, thus, we can express the probability that bin $j$ is maximal, without using extreme distributions.\\
Let $J$ denote the index of the maximal user bin, we obtain the following.
\begin{claim}\label{claim: expected maximal bin for random k}
For a random number of users that exceeded threshold $u_p$, the expected maximal bin index $J$ follows
\begin{equation*}
  E[J] =
  \sum_{j=1}^{l}\sum_{m=1}^{K}\binom{K}{m} \left(\frac{k}{K}\right)^m\left(\frac{K-k}{K}\right)^{K-m}
  \left(\frac{l-j}{l}\right)^{m}.
\end{equation*}
\end{claim}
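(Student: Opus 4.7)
The proof would mirror the combinatorial decomposition used in Claim~5 (utilized-slot probability), and then extract the expected bin index via the standard tail-sum identity $E[J]=\sum_{j\ge 0}\Pr(J>j)$. Three ingredients are already in place and will be re-used: (i) the Binomial law for the number $m$ of users that exceed the threshold, inherited from (\ref{eqn: collision probability}); (ii) the fact that the bin boundaries in (\ref{eqn: bin boundraies}) were chosen so that, under the asymptotically exponential excess distribution in (\ref{eqn: Exponential CDF}), each of the $l$ bins carries probability exactly $1/l$; and (iii) the descending ordering convention under which bin~$1$ hosts the highest-capacity excesses, so the strongest passing user lies in the \emph{lowest-indexed} occupied bin, which is exactly $J$.

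First I would condition on the event that exactly $m$ users pass the threshold, contributing the factor $\binom{K}{m}(k/K)^{m}((K-k)/K)^{K-m}$. Second, given $m$ passages, ingredient~(ii) makes the bin assignments i.i.d.\ uniform on $\{1,\ldots,l\}$, so the event $\{J>j\}$ coincides with ``no passing user lands in bins $1,\ldots,j$,'' giving the clean form
\[
\Pr(J>j\mid m)=\Bigl(\tfrac{l-j}{l}\Bigr)^{m}.
\]
Third, applying the tail-sum identity conditional on $m$ and then taking expectation over $m$ via the law of total expectation produces the double sum in the statement. The sum truncates at $j=l-1$ since $J\le l$, and the ``boundary'' term $j=l$ contributes $0^m=0$ for $m\ge 1$, which is why the outer sum can be written as $\sum_{j=1}^{l}$ in the statement.

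\textbf{Main obstacle.} There is no serious analytic hurdle here; everything reduces to the balls-into-bins calculation already done in Claim~5, together with a tail-sum identity. The only real care is bookkeeping: aligning the direction of the bin ordering (bin~$1$ as the highest-capacity bin, so that the winning user is at the \emph{minimum} occupied index) with the summation variable, and handling the off-by-one that arises when moving between the representations $\sum_{j\ge 0}\Pr(J>j)$ and $\sum_{j\ge 1}\Pr(J\ge j)$ so that the indexing matches the stated formula. Once those conventions are pinned down, the claim follows by a direct substitution.
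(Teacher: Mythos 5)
Your proposal is correct and is essentially the paper's own proof: condition on the Binomial number $m$ of threshold exceedances, use the uniform bin assignment to get $\Pr(J>j\mid m)=\left(\frac{l-j}{l}\right)^{m}$, then combine the tail-sum identity with the law of total expectation to obtain the double sum. One remark: the off-by-one you flag does not actually vanish---the identity $E[J]=\sum_{j\ge 0}\Pr(J>j)$ contains a $j=0$ term equal to $1$, so the displayed formula is really $E[J]-1$---but the paper's proof makes the identical slip by writing $E[J\mid k]=\sum_{j=1}^{l}\Pr(J>j\mid k)$, so your attempt matches the paper in both method and blemish.
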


\begin{figure}
  \centering
  \includegraphics[scale=1.1]{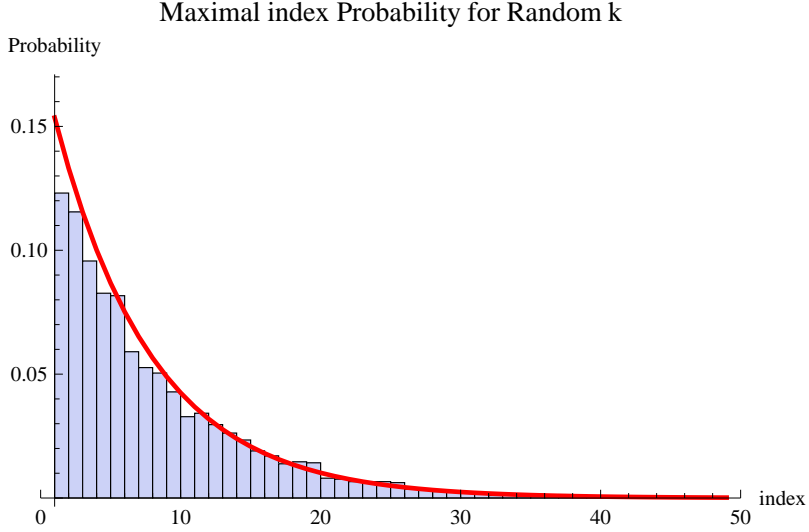}
  \caption{ maximal index simulation and analysis for random $\textbf{k}$, where the line follows (\ref{eqn: probability maximal bin random k}).}\label{fig: max_bin_random_k}
\end{figure} 

\begin{proof}
Given $k$ users that exceeded threshold we obtain
\begin{eqnarray}\label{eqn: expected maximal bin index fixed k}
E[J | k \textmd{ users exceeded}] &=& \sum_{j=1}^{l}\Pr(\left. J > j\right| k \textmd{ user exceeded})\\
	&=& \sum_{j=1}^{l}\left(\frac{l-j}{l}\right)^{k}.\nonumber
\end{eqnarray}
By the law of total expectation we obtain the expected maximal bin index $J$, for random  $\textbf{k}$ of users  as follows.
\begin{eqnarray}\label{eqn: expected value maximal bin index from random k}
E[J] &=& E_{\textbf{k}}\left[E[J | \textbf{k} \textmd{ users exceeded}]\right]\\
  &=& \sum_{j=1}^{l}\sum_{m=1}^{K}\Pr(\textbf{k} = m)\Pr(\left. J > j\right| \textbf{k}=m)\nonumber\\
  &=& \sum_{j=1}^{l}\sum_{m=1}^{K}\binom{K}{m} \left(\frac{k}{K}\right)^m\left(\frac{K-k}{K}\right)^{K-m}
  \left(\frac{l-j}{l}\right)^{m}\nonumber.
\end{eqnarray}
Hence, Claim~\ref{claim: expected maximal bin for random k} follows.
\end{proof}
\begin{remark}
The probability that bin $J = j$ is maximal for a fixed $k$ users who exceeded the threshold follows
\begin{eqnarray}\label{eqn: probability maximal bin random k}
  \Pr(J=j) &=& \sum_{m=1}^{K}\Pr(\textbf{k}=m)\Pr\left(J=j|\textbf{k}=m\right)\\
  &=& \sum_{m=0}^{K}\binom{K}{m}\left(\frac{k}{K}\right)^{m}\left(1-\frac{k}{K}\right)^{K-m}
  \left(\left(\frac{l-j+1}{l}\right)^m-\left(\frac{l-j}{l}\right)^m\right).\nonumber
\end{eqnarray}
\end{remark}

\begin{figure}
\centering
  \includegraphics[scale=1.1]{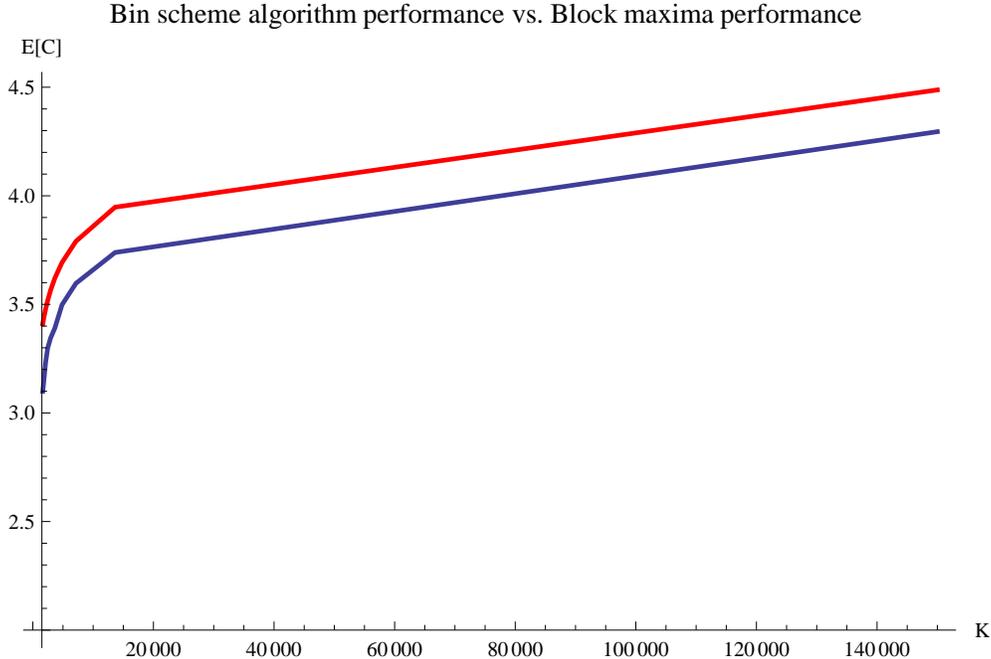}
  \caption{Bottom line - Threshold scheme expected capacity   for K users, setting threshold that on average $\lceil \log K \rceil$ users exceeds threshold, placing them into $(\lceil\log K \rceil)^2$ bins with the boundaries obtained in  (\ref{eqn: bin boundraies}).Top line is the optimal centralized scheme performance. }\label{fig: thr prf vs b maxima}
\end{figure}
In Figure~\ref{fig: thr prf vs b maxima} we see the enhanced algorithm performance when setting a threshold such that $\lceil \log K \rceil$ users exceed it on average, then placing them into $(\lceil\log K\rceil)^2$ mini-slots, comparing to the optimal centralized scheduler.

\section{Conclusion}\label{sec. conc}
In this paper, we presented a distributed scheduling scheme for exploiting multiuser diversity in a non-uniform environment, where each user has a different location, therefor will experience different channel distribution.
We characterized the scaling law of the expected capacity and the system throughput by a point process approximation, and presented a simple analysis for the expected value and throughput when applying QoS upon users. Moreover, we  presented an enhancement for the distributed algorithm in which the expected capacity and throughput reaches the optimal capacity, for  a small  delay price. 
\appendix

\section{Appendix A}\label{sec: appendix A}
In this section we derive the constants $a_n$ and $b_n$, for the Gaussian case.
\begin{proof}
We denote the standard normal distribution function and density function by $\Phi$ and $\phi$ respectively, and notice the relation of the tail of $\Phi$, for positive values of $x$, from Taylor series:
\begin{equation}\label{eqn: Phi relation of tail}
    1- \Phi(x)\leq \frac{\phi(x)}{x}
\end{equation}
with equality when $x \rightarrow \infty$.

First, we wish to find where $\xi$ converges to. I.e., to what distribution type the maxima of Gaussian distribution converges.
Thus, we use the relation in (\ref{eqn: Phi relation of tail}) to derive the shape parameter of Gaussian maxima,
\begin{eqnarray*}
  \xi&\approx & \frac{d}{dx}\left[ \frac{\phi(x)/x}{\phi(x)} \right]\\
   &\approx& \frac{d}{dx}\frac{1}{x}\rightarrow 0
\end{eqnarray*}
we substitute $\xi \rightarrow 0 $ in (\ref{eqn: G def}), and find the limit distribution from extreme value theory
\begin{eqnarray}\label{eqn: extreme theroy for xi=0}
  \Pr(M_n \leq u) &=& \left[\Phi(u)\right]^n \\
  &=&  \exp\left[-(1+\xi \frac{u-b_n}{a_n})\right]^{-\frac{1}{\xi}}\nonumber\\
  &\stackrel{\xi \rightarrow 0} { \longrightarrow }& \exp[-e^{-\left(\frac{u - b_n}{a_n}\right)}].\nonumber
\end{eqnarray}
That is, the maxima of Gaussian random variables converges to Gumbel distribution, where $u = a_nx +b_n$.

For retrieving the normalizing constants, $a_n$ and  $b_n$, as can be found rigorously at \cite[Theorem 1.5.3.]{leadbetter1983}, we use a well known $\log$ approximation for large values of $x$,
$$ -\log[1-(1-x)] \geq 1 - x$$
and apply it to (\ref{eqn: extreme theroy for xi=0}), i.e.,
\begin{equation}\label{eqn: log approx convergence}
    -\log [\Phi(u)]^n = -n\log (1-[1-\Phi(u)]) \geq n\left(1-\Phi(u)\right)\\
\end{equation}
hence,
\begin{equation}\label{eqn: log approx to GEV}
    n\left(1-\Phi(u)\right)\longrightarrow (1 + \xi x)^{-\frac{1}{\xi}}
\end{equation}
apply  $\xi \rightarrow 0$ to (\ref{eqn: log approx to GEV}), thus,
\begin{equation}\label{eqn: xiConvergeToZero}
    n\left(1-\Phi(u)\right)\stackrel{\xi\rightarrow 0}{\longrightarrow} e^{-x}.
\end{equation}
So, in oreder to satisfy (\ref{eqn: xiConvergeToZero}), we shell take $1 - \Phi(u) = \frac{1}{n}e^{-x}$.\\
Using again the tail relation (\ref{eqn: Phi relation of tail}), we obtain,
\begin{equation*}
    \frac{1}{n} e^{-x} \sim \frac{\phi(u)}{u}
\end{equation*}
or
\begin{equation}\label{eqn: relation of u and x}
    \frac{1}{n}e^{-x}\frac{u}{\phi(u)} \stackrel{x\rightarrow \infty}{\longrightarrow} 1
\end{equation}
applying $\log$ function on (\ref{eqn: relation of u and x}) will lead us to
\begin{equation}\label{eqn: u x log relation}
    -\log n - x + \log u - \log \phi(u) \longrightarrow 0
\end{equation}
we substitute $\phi(u)$ for a Normal density function, $\frac{1}{\sqrt{2\pi}} e^{-\frac{1}{2}u^2}$ in (\ref{eqn: u x log relation}),\\
hence,
\begin{equation}\label{eqn: u x log relation2}
    -\log n -x + \log u +\frac{1}{2}\log 2\pi + \frac{u^2}{2} \longrightarrow 0
\end{equation}
and by substitute $x = \frac{u-b_n}{a_n}$ in (\ref{eqn: u x log relation2}) and rearrange it a little, we obtain,
\begin{equation*}
    -\left(\frac{u-b_n}{a_n}\right) + \log u +\frac{1}{2}\log 2\pi + \frac{u^2}{2} \longrightarrow  \log n
\end{equation*}
and since $u^2$ has the main influence on the left hand side, it implies that
\begin{equation}\label{eqn: u and n relation}
    \frac{u^2}{2\log n} \longrightarrow 1
\end{equation}
hence, by applying $\log$ to  (\ref{eqn: u and n relation}), we obtain
\begin{equation*}
    2\log u - \log 2 - \log \log n \longrightarrow 0
\end{equation*}
or
\begin{equation}\label{eqn: log u def}
    \log u = \frac{1}{2}\left(\log2 +\log\log n \right) + o(1).
\end{equation}

We place (\ref{eqn: log u def}) in (\ref{eqn: u x log relation2}),and rearrange it a little to obtain
\begin{eqnarray}\label{eqn: u square}
   u^2 &=& 2\log n \left[  (\log n)^{-1}x + 1\right. +\\
     &&- \frac{1}{2}(\log n)^{-1}\left( \log 4\pi + \log \log n \right)+\left.o(\frac{1}{\log n})\right]\nonumber
\end{eqnarray}
and hance,
\begin{eqnarray}\label{eqn: u def}
    u &=& 2(\log n)^{\frac{1}{2}}\left[ \frac{x}{(2\log n)} + 1 + \right.\\
     &&- \left. \frac{\frac{1}{2}\left( \log 4\pi + \log \log n \right)}{(2\log n)} + o(\frac{1}{\log n})\right]\nonumber\\
    &=& (2\log n)^{-\frac{1}{2}}x + (2\log n)^{\frac{1}{2}} +\nonumber\\
    &&- \frac{1}{2}(2\log)^{-\frac{1}{2}}\left(\log \log n + \log 4\pi \right)+ o\left(\frac{1}{(\log n)^{\frac{1}{2}}}\right)\nonumber\\
    &=& a_n x + b_n + o(a_n) \nonumber
\end{eqnarray}
which means that (\ref{eqn: extreme theroy for xi=0}) follows for
\begin{equation*}
    a_n = (2\log n)^{-\frac{1}{2}}
\end{equation*}
and
\begin{equation*}
   b_n = (2\log n )^{\frac{1}{2}}- \frac{1}{2}(2\log n)^{-\frac{1}{2}}[\log\log n + \log(4\pi)].
\end{equation*}
\end{proof} 

\section{Appendix C}\label{sec: appendix C}

\begin{proof}(Theorem~\ref{theorem: point process})  
Let $N_n(B)$ and $N(B)$ be the number of points of $P_n$ and $P$ respectively in set $B$.\\
Assuming that for any $n$ disjoint sets $B_1,B_2,...,B_n$, with $B_i\subset C, \forall i=1,2..,n$, then $N(B_1),N(B_2),...,N(B_n)$ are independent random variables.
we will show that as $n \rightarrow \infty $
$$ E(N_n(B)) \longrightarrow E(N(B))$$
and
$$\Pr(N_n(B)=0) \longrightarrow \Pr(N(B)=0).$$
Thus, we take $B_v = (0,1] \times (v,\infty)$, such that the $i_{th}$ point of $P_n$ is in $B_v$ if
$$\frac{\textbf{x}_i - b_n}{a_n}>v$$
i.e., if $\textbf{x}_i>a_n v +b_n$.\\
The probability of this is $1 - F(a_n v + b_n)$.\\
Hence, the expected number of such points is
\begin{eqnarray*}
E[N_n(B_v)] &=& n[1-F(a_n v +b_n)]\\
&\leq& -\log\left[F(a_n v+b_n)\right]^n\\
&\rightarrow& -\log G(v)\\
&=&(1+\xi v)_{+}^{-\frac{1}{\xi}}\\
&=& \Lambda(B_v)\\
&=&E[N(B_v)].
\end{eqnarray*}
Similarly, the event $N_n(B_v)=0$ can be expressed as
\begin{eqnarray*}
  \{N_n(B_v)\} &=& \left\{ \frac{\textbf{x}_i - b_n}{a_n}\leq v,  \forall i=1,...,n\right\} \\
   &=& \left\{\textbf{x}_i\leq a_n v +b_n \forall i = 1,...,n\right\}
\end{eqnarray*}
So
\begin{eqnarray*}
\Pr(N_n(B_v)=0) &=& \{F(a_n v + b_n)\}^n\\
&\rightarrow& G(v)\\
&=& \exp[-(1+\xi v)_{+}^{-1/\xi}]\\
&=& \exp[-\Lambda(B_v)]\\
&=& \Pr(N(B_v)=0)
\end{eqnarray*}
\end{proof}


\bibliographystyle{IEEEtran}

\bibliography{bibliography}

\end{document}